\newtheorem{thm}{Theorem}
\newtheorem{definition}{Definition}
\newtheorem{lemma}{Lemma}
\newcommand{\C}{\mathbb{C}}
\newcommand{\diag}{\mathrm{diag}}
\newcommand{\ee}{\mathrm{e}}
\newcommand{\ii}{\mathrm{i}}
\newcommand{\dd}{\mathrm{d}}
\newcommand{\rr}{\mathbb{R}}
\newcommand{\pp}{\mathbb{P}}
\newcommand{\I}{\mathds{1}}
\newcommand{\loss}{\mathcal{L}}
\newcommand{\D}{\mathcal{D}}
\newcommand{\dg}[1]{{#1}^\dagger}
\newcommand{\dt}{\Delta t}
\newcommand{\quics}{Joint Center for Quantum Information and Computer Science (QuICS), University of Maryland \& NIST, College Park, MD 20742, USA}
\newcommand{\jqi}{Joint Quantum Institute (JQI), University of Maryland \& NIST, College Park, MD 20742, USA}
\begin{document}

\title{Scalably learning quantum many-body Hamiltonians from dynamical data}
\author{F.~Wilde}
\affiliation{Dahlem Center for Complex Quantum Systems, Freie Universit\"{a}t Berlin, 
14195 Berlin, Germany}
\author{A.~Kshetrimayum}
\affiliation{Helmholtz-Zentrum Berlin f{\"u}r Materialien und Energie, 14109 Berlin, Germany}
\affiliation{Dahlem Center for Complex Quantum Systems, Freie Universit\"{a}t Berlin, 
14195 Berlin, Germany}
\email{this is a test}
\author{I.~Roth} 
\affiliation{Quantum Research Centre, Technology Innovation Institute (TII), Abu Dhabi}
\author{D.~Hangleiter} 
\affiliation{\quics}
\affiliation{\jqi}

\author{R.~Sweke}
\altaffiliation[Currently at ]{IBM Quantum, Almaden Research Center, San Jose, CA 95120, USA}
\affiliation{Dahlem Center for Complex Quantum Systems, Freie Universit\"{a}t Berlin, 
14195 Berlin, Germany}
\author{J.~Eisert}
\affiliation{Dahlem Center for Complex Quantum Systems, Freie Universit\"{a}t Berlin, 
14195 Berlin, Germany}
\affiliation{Helmholtz-Zentrum Berlin f{\"u}r Materialien und Energie, 14109 Berlin, Germany}
\affiliation{Fraunhofer Heinrich Hertz Institute, 10587 Berlin, Germany}

\date{\today}
\begin{abstract}
The physics of a closed quantum mechanical system is governed by its Hamiltonian. However, in most practical situations, this Hamiltonian is not precisely known, and ultimately all there is are data obtained from measurements on the system. In this work, we introduce a highly scalable, data-driven approach to learning families of interacting many-body Hamiltonians from dynamical data, by bringing together techniques from gradient-based optimization from machine learning with efficient quantum state representations in terms of tensor networks. Our approach is highly practical, experimentally friendly, and intrinsically scalable to allow for system sizes of above 100 spins. In particular, we demonstrate on synthetic data that the algorithm works even if one is restricted to one simple initial state, a small number of single-qubit observables, and time evolution up to relatively short times. For the concrete example of the one-dimensional Heisenberg model our algorithm exhibits an error constant in the system size and scaling as the inverse square root of the size of the data set.
\end{abstract}
\maketitle

Given the Hamiltonian of a quantum system we can, in principle, derive all physical properties of that system. Therefore, many studies of theoretical physics start by specifying the Hamiltonian. 
In practice, an effective Hamiltonian is typically hypothesised based on theoretical considerations. 
This hypothesis is then confirmed or rejected by comparing its predictions with experimental data.
However, in many contexts, it is far from clear what the effective Hamiltonian describing a physical system actually is, at least to high levels of accuracy.
In this situation, one might hope to infer the Hamiltonian governing the system directly from experimental data or, in other words, to \emph{learn the Hamiltonian}.

In addition to their conceptual, foundational significance, there is a second more technologically minded reason why Hamiltonian learning is important:
In the context of quantum information science, one aims at making predictions to high and often unprecedented accuracy.
Analog quantum simulators in particular allow to assess the physics of interacting quantum systems presumably beyond the reach of classical computers~\cite{CiracZollerSimulation, Trotzky,2dMBLScience}.
To build trust in such devices it is neccessary to precisely determine the Hamiltonian at play.
Moreover, in the context of near-term quantum computing, engineering and testing quantum processors requires precise knowledge of the underlying physics \cite{Carrasco_Verification_2021}.
Hamiltonian learning can thus be seen as a primitive in the precise engineering of quantum devices.
As a first step toward the ultimate goal of learning an effective Hamiltonian from scratch, one might consider the less demanding task of learning the precise parameters of a given Hamiltonian model.

% Our contribution
In this work, we introduce a technique that allows to scalably learn the Hamiltonian parameters of interacting quantum many-body systems.
To achieve applicability of our method to large systems, we bring together two computational methodologies.
On the one hand, we make use of \emph{tensor network techniques} that allow to efficiently compute the properties of a large class of quantum many-body systems in space and---at least for sufficiently small times---in time.
On the other hand, we get inspiration from \emph{machine-learning} approaches and make use of stochastic gradient descent optimization methods~\cite{krastanov_stochastic_2019} and automatic differentiation~\cite{liao_differentiable_2019}.
On a high level, we solve a \emph{maximum likelihood estimation} (MLE) problem on large sets of dynamical measurement data using (stochastic) gradient-based optimization.
The data we use are simply given by the outcomes of Pauli measurements of time-evolved states starting from one simple reference state.

\begin{figure*}
  \centering
  \includegraphics[width=.85\textwidth]{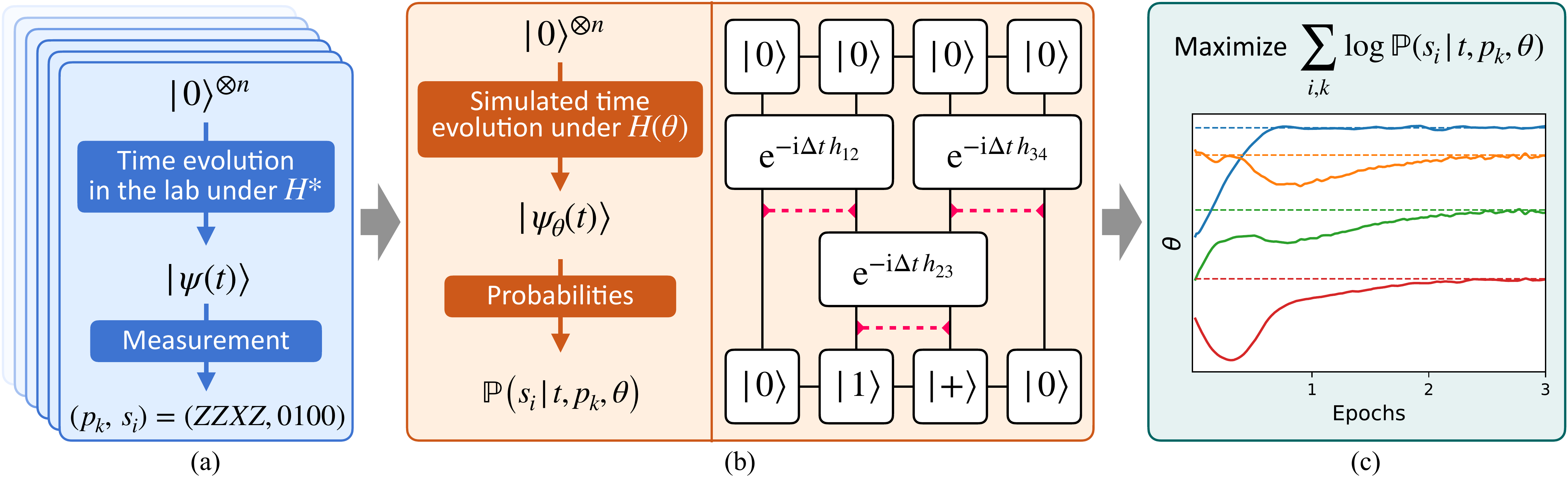}
  \caption{\label{fig:setting}
    The core idea of the method.
    \textbf{(a)}~Data are accumulated from repeated measurements at different times.
    Each data point corresponds to a randomly chosen Pauli basis and a string of binary measurement outcomes.
    \textbf{(b)}~The time evolution is simulated under a Hamiltonian $H(\theta)\in {\cal C}$ using the TEBD algorithm.
    After each contraction with a local time step $\ee^{-\ii \dt\,h_{\alpha,\beta}}$ the resulting rank-4 tensor is split and the bond dimensions are truncated to their original size, as indicated by the dashed red lines.
    \textbf{(c)}~The best suitable parameters $\theta^\mathcal{D}\in\rr^\nu$---according to the negative log-likelihood cost function---matching the observed data are learned using automatic differentiation and gradient-based optimization.
    In the case shown here $\nu=4$ and the true parameter values (dashed lines) are recovered successfully.
  }
\end{figure*}

Previous work on many-body Hamiltonian learning \footnote{See Ref.~\cite{Carrasco_Verification_2021} for a recent perspective article.} can be roughly divided into two categories. 
On the one hand, there is a body of theoretical work which rigorously shows that many-body Hamiltonians can be efficiently learned from either thermal states or eigenstates of the Hamiltonian \cite{bairey_learning_2019,BaireyNJP,anshu_sample-efficient_2021,haah_optimal_2021,qi_determining_2019}, or short-time evolution \cite{zubida_optimized_2021,franca_efficient_2022,gu_practical_2022,harper_fast_2021}, sometimes even in a way that is robust to \emph{state preparation and measurement} (SPAM) errors \cite{yu_practical_2022}.
On the other hand, there are practically motivated approaches to learn Hamiltonians that make use of the entire long-time dynamics of a system. 
Such methods face the challenge of predicting the time evolution. 
To overcome this challenge, 
one can utilize specific structure of certain classes of Hamiltonians~\cite{oi_quantum_2012,burgarth_indirect_2011,
  % krastanov_stochastic_2019,
GoogleHamiltonianLearningShort},
algebraic properties of the Hamiltonian \cite{zhang_quantum_2014,chen_experimental_2021}, or their conserved quantities \cite{li_hamiltonian_2020,pastori_characterization_2022}. 
Alternatively, one can also make use of high-level machine-learning techniques such as neural networks~\cite{bienias_meta_2021, schuster_learning_2022,
mohseni_deep_2021, han_tomography_2021, valenti_scalable_2022} to predict the time series. 
Close in mindset to our approach are schemes such as the one of Ref.~\cite{xie_bayesian_2022} which makes use of tensor-network techniques to directly learn a control model of an analogue quantum simulator.

However, few steps have been taken so far to devise practical methods for learning interacting quantum many-body Hamiltonians of large systems with a large number of unknown parameters.
A large-scale demonstration on up to 100 sites with many parameters has only been achieved using steady states of the Hamiltonian~\cite{evans_scalable_2019}, while methods that are in principle scalable and practical have not yet been demonstrated for large system sizes~\cite{li_hamiltonian_2020}, or have prohibitively large (albeit polynomial) scaling~\cite{zubida_optimized_2021,yu_practical_2022}.
Moreover, such methods tend to require very specific state preparations and/or measurements, as well as the estimation of expectation values. 

In contrast, our learning algorithm directly runs on the measurement outcomes of a small number of Pauli measurements, which can be chosen suitably depending on the measurement setup. 
This ensures that we do not discard any correlations contained in the measurement, and reduces the total number of experiments required. 
With sufficient computational memory, the run-time of our method scales linearly in the system size (see Fig.~\ref{fig:timing}) and the reconstruction error scales as an inverse square root of the total number of measurements.

\paragraph*{A scalable method for Hamiltonian learning.}
We consider the following setting, as illustrated in \cref{fig:setting}: A closed quantum many-body 
system consisting of $n$ spins---or qubits---in the laboratory, governed by the Hamiltonian $H^*$, is initialized in a low-entanglement state.
Specifically, here we consider the initial state to be the product state vector $\ket{0}^{\otimes n}$.
Then the system evolves in time according to the Schr\"odinger equation up to some time $t_j$ whereupon all spins are measured in some, potentially randomly chosen, Pauli basis $p_k \in \lbrace X, Y, Z\rbrace^n$.
The measurement outcome is a binary string $s$ of length $n$.
In general this process is repeated $M$ times for multiple time stamps $t_1, \ldots, t_J$ and Pauli bases $p_1, \ldots, p_K$.
Hence, the resulting data set $\D$ contains $M\!\cdot\!J\!\cdot\!K$ bit-strings.
Alternatively, one could prepare $K$ initial states and only measure in one Pauli basis.

To simulate the laboratory system, we make use of tensor networks, which is a powerful tool for describing quantum many-body systems beyond the reach of exact diagonalization~\cite{schollwock_density-matrix_2011}.
In particular, we use the \emph{time-evolving block decimation} (TEBD) algorithm, illustrated in Fig.~\ref{fig:setting}(b), which is a particularly simple and efficient tensor network algorithm that can be used for computing ground states, thermal states, and non-equilibrium dynamics of quantum many-body systems~\cite{Vidaltebd,ZwolakPRL2004}.

In this work, we focus on the dynamics of one-dimensional systems, although an extension to higher dimensions is also possible~\cite{Kshetrimayum2dMBL2020,Hubigpepsdynamics,Kshetrimayum2dTC2021}.
By choosing an initial state with low entanglement, we can represent it using a \emph{matrix-product state} (MPS) efficiently in space until intermediate times.
To solve the Schr\"odinger equation, the time-evolution operator $\ee^{-\ii tH}$ is decomposed into a product of small Trotter steps $\ee^{-\ii\dt H} \cdots \ee^{-\ii\dt H}$.
Additionally, the operator is decomposed into layers of mutually commmuting terms, as shown in the example in 
Fig.~\ref{fig:setting}(b).
These decompositions incur an approximation error, called \emph{Trotter error}, which can be controlled by the Trotter-step size $\dt$.
Each time, after contracting a layer of operators with the MPS, the tensors are split using a \emph{singular value decomposition} (SVD) to retain the MPS structure with a given bond dimension.
After each splitting the bond dimension between tensors gets multiplied by the physical dimension of the sites, which in our case is 2.
To mitigate this exponential growth the bond dimension is truncated after splitting, which incurs the \emph{truncation error}.
Generally, the longer the time evolution is, the more entanglement is created in the system, which increases the truncation error.
As such, TEBD is limited to intermediate times with a constant scaling in the system size, but importantly, not as short that expansions in powers of the Hamiltonian are suitable.
Note that the computational complexity of TEBD scales linearly with the system size in one dimension.
Hence, when the set of timestamps $\lbrace t_j\rbrace$ and the size $d$ of the data set is constant, the computational cost of one gradient evaluation also scales linearly, due to the use of automatic differentiation, as shown in detail in Appendix~\ref{sec:implementation}.

% MLE loss
To recover $H^*$, we fix an ansatz class of Hamiltonians governed by certain Hamiltonian parameters. 
Formally, we define a \emph{class of parametrized Hamiltonians} $\mathcal{C} = \lbrace H(\theta) \vert \theta \in \Theta \rbrace$ as the range of a twice continuously differentiable function $\theta \mapsto H(\theta)$ (see Appendix \ref{sec:normality}).
The parametrization is chosen according to physical intuition.
For instance, the parameters could be the amplitudes of interactions one expects to occur in the lab experiment.
Then, we classically simulate the time evolution under the parametrized Hamiltonian $H(\theta_\mathrm{ini})\in {\cal C}$ using TEBD, where $\theta_\mathrm{ini}$ is a, typically random, initialization of the parameters.
Based on the data $\D$ we formulate a MLE problem by defining the \emph{negative log-likelihood}
\begin{equation}
  \label{eq:loss}
  \loss^\mathcal{D}(\theta) = - \frac{1}{d}\sum_{j=1}^J\sum_{k=1}^K\sum_{i=1}^M \log \pp(s_{i,j,k} \vert t_j, p_k, \theta),
\end{equation}
with $d = \vert \D\vert$, which plays the role of a \emph{loss function}.
The minimizer $\theta^\mathcal{D}=\mathrm{argmin}\,\loss^\mathcal{D}(\theta)$ of the loss corresponds to the Hamiltonian in $\mathcal{C}$ which best describes the given data.
Note that due to this approach there is no need to post-process the data, for instance, by estimating single-spin expectation values.
Therefore, all correlations contained in the data set are accessible to the learning algorithm.
To evaluate the loss function we compute the probabilities according to the Born rule $\pp(s_{i,j,k} \vert t_j, p_k, \theta) = \vert\!\bra{\phi_{i,j,k}}\ee^{-\ii H(\theta)t_j}\ket{0}\!\vert^2$, where $\ket{\phi_{i,j,k}}$ is the $i$-th post-measurement state vector corresponding to the $p_k$~measurement after evolution up to time $t_j$.
This probability can be computed using TEBD as illustrated by the tensor-network diagram in Fig.~\ref{fig:setting}(b).

% AD
When the number of parameters $\nu$ is large, i.e., the minimization problem is high dimensional, gradient-based optimization becomes advantageous over zeroth-order methods.
In order to obtain the gradient efficiently and accurately, even for very large $\nu$, we use automatic differentiation to back-propagate the derivative through the entire TEBD algorithm.
This has been implemented using the \emph{auto-differentiation framework} JAX~\cite{jax2018github}, which supports complex numbers and matrix operations such as the matrix exponential and the singular value decomposition.
To allow the differentiation of the SVD with complex inputs we derived and implemented the Jacobian-vector product rule and integrated it into JAX (see Appendix~\ref{sec:svd-ad}).
Additionally, we extensively exploit the potential of the Accelerated Linear Algebra (XLA) compiler which JAX is built on.
Specifically, we use just-in-time compilation and vectorization by writing the TEBD using only XLA-compatible control flow syntax.
Further implementation details can be found in Appendix~\ref{sec:implementation} and the Github repository~\cite{github_dynamical-hamiltonian-learning_nodate}.

% Optimizer
The algorithm we use to minimize $\loss$ consists of two stages.
In the first stage we use the popular optimizer ADAM and mini-batch stochastic gradient descent.
In each step of the descent we select one bit-string per Pauli basis, while the sets of bit-strings get shuffled for each epoch.
The resulting batch of data, containing $K\!\cdot\!J$ bit-strings, is then used to compute the stochastic gradient estimator.
As this reduces the computation time per gradient step drastically, the first stage quickly moves to the vicinity of a local minimum.
To find the minimum with high precision, in the second stage we use the entire data set to compute the gradient and employ the \emph{pseudo Newton optimizer BFGS}.
The gradient steps in the second stage require more computation time, but the BFGS algorithm takes only a few tens of steps to find the minimum with high precision.
For a motivation of this particular optimization algorithm, see Appendix~\ref{sec:implementation}.

\begin{figure*}
  \centering
  \includegraphics[width=\textwidth]{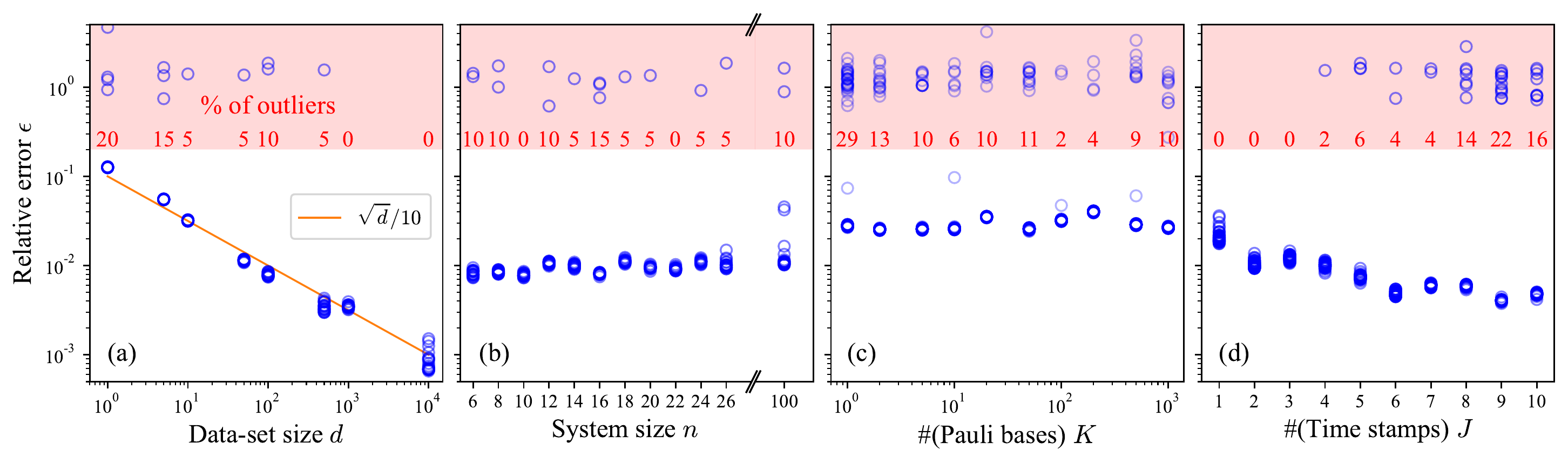}
  \caption{
    \label{fig:results}
    The scaling of the relative recovery error $\epsilon$ for various experimental parameters.
    Each data point (blue circle) corresponds to a unique parameter initialization $\theta_\mathrm{ini}$, drawn at random.
    Vertically aligned data points correspond to the same data set $\mathcal{D}$.
    All points above the threshold $\epsilon=0.2$ (red dashed line) are counted as outliers (red numbers) and deemed as \emph{not successfully converged}.
    In all plots, except (c), the number of Pauli bases is 100.
    In all plots, except (d), the number of time stamps is 5.
    \textbf{(a)}~The error decreases with the size of the data set.
    \textbf{(b)}~The error does not appear to significantly depend on the system size and thereby also not on the number of parameters.
    \textbf{(c)}~Even with only one Pauli basis, we can recover the Hamiltonian, however with a larger amount of outliers.
    The total number of samples has been kept fixed at $5 \cdot 10^3$.
    \textbf{(d)}~The error decreases as one increases the number of time stamps $J$.
    One time stamp means measurements were only taken at time $\tau$, two time stamps implies measurements were taken at times $\tau$ and $2\tau$, etc.
    The total number of measurements has been kept fixed to $\approx 6 \cdot 10^4$ in each column.
    The number of vertically aligned points, i.e., the number of randomly drawn parameter initializations, is 20 for (a) and (b), 100 for (c), and 50 for (d).
  }
\end{figure*}

\paragraph*{Analysis.}
In practice, when the Hamiltonian parameters are unknown, the only accessible indicator of success is the loss function.
Once the loss function has converged to a low value, the target Hamiltonian $H^*$ might have been recovered.
We find that one can distinguish successfully converged instances from instances which converged to a local minimum a posteriori only by the value of their loss function, as we elaborate in \cref{sec:landscape}.
However, in this work, the data are generated synthetically and thus we exactly know the target Hamiltonian.
This allows us to precisely gauge the performance of our method by defining the relative error
\begin{align}
  \label{eq:error}
  \epsilon(\theta) := \frac{\big\Vert\theta - \theta^*\big\Vert_2}{\big\Vert\theta^*\big\Vert_2}.
\end{align}
The data set $\D$ consists of $d$ randomly sampled bit-strings, representing the outcomes of quantum measurements.
Thus it is natural to define the minimizer of the loss function as a random variable $\hat{\theta}^{(d)}$ itself, where the data set size $d$ is an independent parameter and the sets of measurement bases $\lbrace p_k\rbrace$ and times $\lbrace t_j\rbrace$ are fixed.
Consequently, the relative error is also a random variable $\hat{\epsilon}^{(d)} = \epsilon(\hat{\theta}^{(d)})$.
By examining the MLE problem at hand we find that the relative error $\hat{\epsilon}^{(d)}$ scales as $d^{-1/2}$ in the large-sample limit.
\begin{thm}[Asymptotic error (informal)]
  \label{thm:error-informal}
  Suppose the class of parameterized Hamiltonians $\mathcal{C}$ is well conditioned, such that the estimator $\hat{\theta}^{(d)}$ is consistent and the Hessian of the loss is non-singular.
  Then for any $\delta \in (0,1]$ there exists a function $f(d) = \mathcal{O}(d^{-1/2})$ such that $\pp\big[\hat{\epsilon}^{(d)} > f(d)\big] < \delta$ when $d$ is sufficiently large.
\end{thm}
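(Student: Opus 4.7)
The plan is to invoke the classical asymptotic normality of the maximum likelihood estimator and convert the resulting distributional convergence into a high-probability bound. Since the hypotheses of the theorem are essentially the standard regularity conditions, the argument is mostly bookkeeping, with the nontrivial ingredient being the verification that those conditions actually apply to the Born-rule likelihood built from the tensor-network ansatz.

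First I would verify the smoothness structure that sits behind the asymptotic theory. The map $\theta \mapsto H(\theta)$ is twice continuously differentiable by assumption, and composition with the matrix exponential and the Born rule yields that, for each fixed $(t_j, p_k, s)$, the probability $\pp(s \vert t_j, p_k, \theta) = \vert\bra{\phi_{s,j,k}} \ee^{-\ii H(\theta) t_j} \ket{0}\vert^2$ is $C^2$ in $\theta$. Restricting attention to outcomes with strictly positive probability at $\theta^*$ (a full-measure event) makes the per-sample log-likelihood $C^2$ with integrable gradient and Hessian in a neighborhood of $\theta^*$. Together with the assumed consistency $\hat{\theta}^{(d)} \to \theta^*$ in probability and the assumed non-singularity of the expected Hessian of $\loss^{\D}$ at $\theta^*$, this places the estimator in the setting of the standard MLE central limit theorem (e.g.\ van der Vaart, \emph{Asymptotic Statistics}, Thm.~5.39).

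Second, I would apply that theorem to conclude
\begin{equation}
  \sqrt{d}\,\bigl(\hat{\theta}^{(d)} - \theta^*\bigr) \;\xrightarrow{\mathrm{d}}\; \mathcal{N}\!\bigl(0,\,I(\theta^*)^{-1}\bigr),
\end{equation}
where the Fisher information $I(\theta^*)$ coincides with the expected Hessian of the per-sample negative log-likelihood and is therefore invertible. Tightness of the limiting Gaussian gives, for any $\delta \in (0,1]$, a constant $C_\delta$ with $\pp[\Vert \mathcal{N}(0, I(\theta^*)^{-1})\Vert_2 > C_\delta] \leq \delta/2$. Combining this with the Portmanteau theorem (applied to the open half-line $(C_\delta, \infty)$ under the continuous map $x \mapsto \Vert x \Vert_2$) yields that for all sufficiently large $d$,
\begin{equation}
  \pp\!\left[\sqrt{d}\,\big\Vert \hat{\theta}^{(d)} - \theta^* \big\Vert_2 > C_\delta\right] < \delta.
\end{equation}
Dividing by $\Vert \theta^* \Vert_2$ and setting $f(d) := C_\delta / (\sqrt{d}\,\Vert \theta^* \Vert_2) = \mathcal{O}(d^{-1/2})$ gives the claim $\pp[\hat{\epsilon}^{(d)} > f(d)] < \delta$.

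The main obstacle is not the invocation of the MLE CLT itself but the verification that the per-sample log-likelihood really does satisfy the Cram\'er-type regularity conditions in the quantum setting. Two points in particular need care: handling measurement outcomes whose Born probability vanishes (or is exponentially small) under $H(\theta^*)$, which requires restricting to a neighborhood of $\theta^*$ where such outcomes have negligible contribution, and identifying the Fisher information matrix with the deterministic limit of the empirical Hessian of $\loss^{\D}$, so that the hypothesised non-singularity of the latter transfers to invertibility of $I(\theta^*)$. Once these are dispatched, the remainder is the routine translation of convergence in distribution to a $\delta$-level high-probability tail bound.
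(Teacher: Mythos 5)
Your proposal is correct and follows essentially the same route as the paper: verify the $C^2$ regularity of the Born-rule log-likelihood, invoke asymptotic normality of the MLE (the paper cites Newey--McFadden and explicitly checks the zero-mean/CLT property of the score across the $J\times K$ measurement settings, where you cite van der Vaart), and then convert convergence in distribution into a tail bound via the $\delta/2$ trick, arriving at the same $f(d)=C_\delta/(\sqrt{d}\,\Vert\theta^*\Vert_2)$. The only cosmetic difference is that the paper states the limiting covariance as the sandwich $H^{-1}\Sigma H^{-1}$ rather than $I(\theta^*)^{-1}$, which matters only if one does not assume the information equality.
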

A rigorous statement of the theorem as well as a proof is provided in Appendix~\ref{sec:normality}.
Note that this scaling can be seen in Fig.~\ref{fig:results}(a) even for moderate values of $d$, despite the fact that we cannot prove that our choice of $\mathcal{C}$ is well conditioned.
Lastly, note that the theorem makes a statement about the global minimum $\theta^\mathcal{D}$ of the loss function.
However, it does not guarantee the convergence to $\theta^\mathcal{D}$.

\paragraph*{Results.}
To numerically analyze our method we generated data in two different ways.
For systems consisting of up to 26 spins, the Schr\"odinger equation has been solved via exact state-vector simulation
\footnote{We use the Runge-Kutta ODE solver implemented in \texttt{scipy.integrate.solve\_ivp} to integrate the Schr{\"o}dinger equation in its exact form on the Hilbert space of dimension $2^n$.}.
To go beyond the reach of exact diagonalization, the Schr\"odinger equation has been solved using TEBD for 100 spins with a sufficiently large bond dimension.
Afterwards, the measurement data have been generated by sampling from the quantum state vector or the MPS, respectively.
In one spatial dimension this can be done efficiently~\cite{han_unsupervised_2018,ferris_sampling_2012}.
The model we use for our analysis is a one-dimensional Heisenberg chain consisting of $n$ spins (and hence sites) 
\begin{equation}
  H = \sum_i \left(J^x X_iX_{i+1} + J^yY_iY_{i+1} + J^zZ_iZ_{i+1} + h_iX_i\right).
\end{equation}
We set the three interaction parameters to the arbitrary values $J^x, J^y, J^z = -1, -0.5, -0.4$, while we draw the local field parameters $h_i$  uniformly at random from the interval $[-1, 1]$.
As such, there are $\nu=3+n$ unknown parameters.
For each system size $n$, one parametrization is randomly drawn and used to simulate the time evolution.
At time stamps in intervals of $\tau=0.2$ (i.e., at $t=\tau, 2\tau, 3\tau, \ldots$) the state is then ``measured'' by sampling from the time-evolved state, typically in 100 uniformly randomly drawn Pauli bases.
\cref{fig:results} displays the performance of the method as a function of various experimental parameters.

\cref{fig:results}(a) shows a $d^{-1/2}$ error scaling which is in line with \cref{thm:error-informal}.
Note that the data points significantly fluctuate around the orange line.
The reason for this is that for each system size only \emph{one} Hamiltonian $H(\theta^*)$ was investigated.
\cref{fig:results}(b) indicates that at least for the Hamiltonian at hand, the error does not depend on the system size or the number of parameters $\nu$.
\cref{fig:results}(c) clearly shows a larger amount of outliers (optimization processes that did not converge to the target $\theta^*$) when only one Pauli basis is used to measure the system.
Meanwhile the accuracy is not affected by the number of Pauli bases.
In contrast, as shown in Fig.~\ref{fig:results}(d) a greater number of time stamps leads to greater accuracy, but also to a higher number of outliers.
As the number of time stamps grows, the loss landscape seems to become more rugged, while the minimum of the loss function moves closer to the true parameter values.
That is, the accuracy of the estimator given by the global minimum improves, but that minimum becomes more difficult to find.
See Appendix~\ref{sec:landscape} for additional discussions on the loss landscape.
Given these results and the fact that one cannot guarantee efficient optimization of the non-convex loss function, we conclude that the number of initial points has to be considered as a hyper parameter, which has to be adjusted according to the number of outliers---as witnessed by the loss function---one encounters during learning.
  
\paragraph*{Conclusion.}
In this work, we have introduced a scalable and experimentally friendly method of Hamiltonian learning based on the \emph{native time evolution} only.
Bringing together ideas and methods from machine learning and tensor networks, we have arrived at a highly practical scheme.
In a way, the methods developed are reminiscent of methods for learning classical dynamical laws by making use of tensor networks \cite{Klus,NonlinearDynamicalLaws}.
In contrast to other scalable methods based on dynamics for very small times~\cite{zubida_optimized_2021,yu_practical_2022}, our method achieves an error scaling of $d^{-1/2}$ as opposed to $d^{-1/3}$ or even $d^{-1/4}$ in the size of the data set.
Furthermore, our results suggest that our method does not require the preparation of multiple distinct initial states and even the number of Pauli bases in the measurement can be kept small.
Therefore, also in experiments where one has limited control over the system or extra manipulations (such as the preparation of specific states) are expensive and introduce additional noise, our method seems to be highly suitable.
It does not require further and possibly infeasible state manipulation such as Pauli twirls as other methods do, but uses the native time evolution.

We believe that our method is placed at a ``sweet spot'' concerning the duration of time evolution: 
For \emph{very short times}, expansions in powers of the Hamiltonian are feasible, so that the time evolution and its derivative can be estimated, but the signal-to-noise ratio is low, leading to a high sample complexity.
For very \emph{long times}, only certain families of quantum systems can be efficiently simulated \cite[e.g.,][]{GoogleHamiltonianLearningShort}, while in 
general the simulation of time evolution is computationally hard~\cite{Vollbrecht}.
In contrast, our method is based on intermediate times, leading to large signal strength that can at the same time be efficiently simulated using TEBD. 

The framework introduced here immediately invites a number of exciting extensions.
It would be interesting to extend the current framework to \emph{time-dependent Hamiltonians}.
Even more pressing seems the generalization to methods that are able to capture \emph{dissipative systems} where the target is to learn the \emph{Lindbladian noise} or the rate of dissipation~\cite{bairey_learning_2019}, derived from tensor network methods  simulating such dynamics~\cite{WeimerRMP,ZwolakPRL2004,KshetrimayumNatComm2017}.
Further work building on the present approach will aim at incorporating SPAM robustness, akin to the methods of Refs.~\cite{GoogleHamiltonianLearningShort,yu_practical_2022}.
It will be interesting to explore methods of model selection to systematically identify meaningful hypothesis classes to give advice when choosing the parametrization.
However, since the gradient-based optimization is efficient in the number of parameters, we can choose highly expressive parametrizations.

The picture that emerges from this work is that one can efficiently in space
and up to intermediate times
 % but inefficiently in time
learn Hamiltonians of one-dimensional quantum many-body systems.
This approach is therefore well suited to analogue simulation platforms such as trapped ions~\cite{blatt_quantum_2012} or ultra-cold atoms~\cite{BlochSimulation}, in which either the state preparation and/or the measurement is flexible.
Based on this data, one can recover the Hamiltonian, 
but cannot efficiently make predictions for long times~\cite{Vollbrecht}: For this, one \emph{has} to perform the experiment to have predictive power, but---using our method---for an accurately calibrated Hamiltonian. 

%Bootstrapping. On hardware accelerators. Impact of noise. Finding the limits of this method. Bose-Hubbard.

\paragraph*{Acknowledgements.}
This work has been supported by the 
DFG (EI 519 20-1 on Hamiltonian learning,
MATH+ Cluster of Excellence
on notions of quantum machine learning
EF1-11, EI 519/15-1 on tensor 
networks and EXC-2046/1 – project ID: 390685689, 
GRK 2433 `Daedalus' on dynamical laws, as well as CRC 183, project B01, 
on notions of machine learning
applied to quantum many-body physics), the BMBF 
(DAQC on digital-analog quantum devices,
MUNIQC-ATOMS on cold atomic platforms),  
the BMWK (PlanQK on applied quantum machine learning), and the Einstein Foundation (Einstein Research Unit on Quantum Devices).
This research is also part of the Munich Quantum Valley (K8), which is supported by the Bavarian state government with funds from the Hightech Agenda Bayern Plus.
It has also received funding from 
the EU's Horizon 2020 research and innovation program
under grant agreement No.~817482 (PASQuanS) on quantum simulations. D.~H.\ acknowledges financial support from the U.S.\ Department of Defense through a QuICS Hartree fellowship. The authors would like to thank the HPC Service of ZEDAT, Freie Universität Berlin, for computing time \cite{hpc_zedat_fu}.

\bibliographystyle{myapsrev4-1}
\bibliography{system_identification}

%merlin.mbs apsrev4-1.bst 2010-07-25 4.21a (PWD, AO, DPC) hacked
%Control: key (0)
%Control: author (72) initials jnrlst
%Control: editor formatted (1) identically to author
%Control: production of article title (-1) disabled
%Control: page (0) single
%Control: year (1) truncated
%Control: production of eprint (0) enabled
\begin{thebibliography}{58}%
\makeatletter
\providecommand \@ifxundefined [1]{%
 \@ifx{#1\undefined}
}%
\providecommand \@ifnum [1]{%
 \ifnum #1\expandafter \@firstoftwo
 \else \expandafter \@secondoftwo
 \fi
}%
\providecommand \@ifx [1]{%
 \ifx #1\expandafter \@firstoftwo
 \else \expandafter \@secondoftwo
 \fi
}%
\providecommand \natexlab [1]{#1}%
\providecommand \enquote  [1]{``#1''}%
\providecommand \bibnamefont  [1]{#1}%
\providecommand \bibfnamefont [1]{#1}%
\providecommand \citenamefont [1]{#1}%
\providecommand \href@noop [0]{\@secondoftwo}%
\providecommand \href [0]{\begingroup \@sanitize@url \@href}%
\providecommand \@href[1]{\@@startlink{#1}\@@href}%
\providecommand \@@href[1]{\endgroup#1\@@endlink}%
\providecommand \@sanitize@url [0]{\catcode `\\12\catcode `\$12\catcode
  `\&12\catcode `\#12\catcode `\^12\catcode `\_12\catcode `\%12\relax}%
\providecommand \@@startlink[1]{}%
\providecommand \@@endlink[0]{}%
\providecommand \url  [0]{\begingroup\@sanitize@url \@url }%
\providecommand \@url [1]{\endgroup\@href {#1}{\urlprefix }}%
\providecommand \urlprefix  [0]{URL }%
\providecommand \Eprint [0]{\href }%
\providecommand \doibase [0]{http://dx.doi.org/}%
\providecommand \selectlanguage [0]{\@gobble}%
\providecommand \bibinfo  [0]{\@secondoftwo}%
\providecommand \bibfield  [0]{\@secondoftwo}%
\providecommand \translation [1]{[#1]}%
\providecommand \BibitemOpen [0]{}%
\providecommand \bibitemStop [0]{}%
\providecommand \bibitemNoStop [0]{.\EOS\space}%
\providecommand \EOS [0]{\spacefactor3000\relax}%
\providecommand \BibitemShut  [1]{\csname bibitem#1\endcsname}%
\let\auto@bib@innerbib\@empty
%</preamble>
\bibitem [{\citenamefont {Cirac}\ and\ \citenamefont
  {Zoller}(2012)}]{CiracZollerSimulation}%
  \BibitemOpen
  \bibfield  {author} {\bibinfo {author} {\bibfnamefont {J.~I.}\ \bibnamefont
  {Cirac}}\ and\ \bibinfo {author} {\bibfnamefont {P.}~\bibnamefont {Zoller}},\
  }\bibinfo {title} {\emph {Goals and opportunities in quantum simulation}},\
  \href {\doibase 10.1038/nphys2275} {\bibfield  {journal} {\bibinfo  {journal}
  {Nature Phys.}\ }\textbf {\bibinfo {volume} {8}},\ \bibinfo {pages} {264}
  (\bibinfo {year} {2012})}\BibitemShut {NoStop}%
\bibitem [{\citenamefont {Trotzky}\ \emph {et~al.}(2012)\citenamefont
  {Trotzky}, \citenamefont {Chen}, \citenamefont {Flesch}, \citenamefont
  {McCulloch}, \citenamefont {Schollw\"ock}, \citenamefont {Eisert},\ and\
  \citenamefont {Bloch}}]{Trotzky}%
  \BibitemOpen
  \bibfield  {author} {\bibinfo {author} {\bibfnamefont {S.}~\bibnamefont
  {Trotzky}}, \bibinfo {author} {\bibfnamefont {Y.-A.}\ \bibnamefont {Chen}},
  \bibinfo {author} {\bibfnamefont {A.}~\bibnamefont {Flesch}}, \bibinfo
  {author} {\bibfnamefont {I.~P.}\ \bibnamefont {McCulloch}}, \bibinfo {author}
  {\bibfnamefont {U.}~\bibnamefont {Schollw\"ock}}, \bibinfo {author}
  {\bibfnamefont {J.}~\bibnamefont {Eisert}}, \ and\ \bibinfo {author}
  {\bibfnamefont {I.}~\bibnamefont {Bloch}},\ }\bibinfo {title} {\emph {Probing
  the relaxation towards equilibrium in an isolated strongly correlated
  one-dimensional {B}ose gas}},\ \href {\doibase doi:10.1038/nphys2232}
  {\bibfield  {journal} {\bibinfo  {journal} {Nature Phys.}\ }\textbf {\bibinfo
  {volume} {8}},\ \bibinfo {pages} {325} (\bibinfo {year} {2012})}\BibitemShut
  {NoStop}%
\bibitem [{\citenamefont {Choi}\ \emph {et~al.}(2016)\citenamefont {Choi},
  \citenamefont {Hild}, \citenamefont {Zeiher}, \citenamefont {Schau{\ss}},
  \citenamefont {Rubio-Abadal}, \citenamefont {Yefsah}, \citenamefont
  {Khemani}, \citenamefont {Huse}, \citenamefont {A.}, \citenamefont {Bloch},\
  and\ \citenamefont {Gross}}]{2dMBLScience}%
  \BibitemOpen
  \bibfield  {author} {\bibinfo {author} {\bibfnamefont {J.-Y.}\ \bibnamefont
  {Choi}}, \bibinfo {author} {\bibfnamefont {S.}~\bibnamefont {Hild}}, \bibinfo
  {author} {\bibfnamefont {J.}~\bibnamefont {Zeiher}}, \bibinfo {author}
  {\bibfnamefont {P.}~\bibnamefont {Schau{\ss}}}, \bibinfo {author}
  {\bibfnamefont {A.}~\bibnamefont {Rubio-Abadal}}, \bibinfo {author}
  {\bibfnamefont {T.}~\bibnamefont {Yefsah}}, \bibinfo {author} {\bibfnamefont
  {V.}~\bibnamefont {Khemani}}, \bibinfo {author} {\bibfnamefont
  {D.}~\bibnamefont {Huse}}, \bibinfo {author} {\bibnamefont {A.}}, \bibinfo
  {author} {\bibfnamefont {I.}~\bibnamefont {Bloch}}, \ and\ \bibinfo {author}
  {\bibfnamefont {C.}~\bibnamefont {Gross}},\ }\bibinfo {title} {\emph
  {Exploring the many-body localization transition in two dimensions}},\ \href
  {\doibase 10.1126/science.aaf883} {\bibfield  {journal} {\bibinfo  {journal}
  {Science}\ }\textbf {\bibinfo {volume} {352}},\ \bibinfo {pages} {1547}
  (\bibinfo {year} {2016})}\BibitemShut {NoStop}%
\bibitem [{\citenamefont {Carrasco}\ \emph {et~al.}(2021)\citenamefont
  {Carrasco}, \citenamefont {Elben}, \citenamefont {Kokail}, \citenamefont
  {Kraus},\ and\ \citenamefont {Zoller}}]{Carrasco_Verification_2021}%
  \BibitemOpen
  \bibfield  {author} {\bibinfo {author} {\bibfnamefont {J.}~\bibnamefont
  {Carrasco}}, \bibinfo {author} {\bibfnamefont {A.}~\bibnamefont {Elben}},
  \bibinfo {author} {\bibfnamefont {C.}~\bibnamefont {Kokail}}, \bibinfo
  {author} {\bibfnamefont {B.}~\bibnamefont {Kraus}}, \ and\ \bibinfo {author}
  {\bibfnamefont {P.}~\bibnamefont {Zoller}},\ }\bibinfo {title} {\emph
  {Theoretical and experimental perspectives of quantum verification}},\ \href
  {\doibase 10.1103/PRXQuantum.2.010102} {\bibfield  {journal} {\bibinfo
  {journal} {PRX Quantum}\ }\textbf {\bibinfo {volume} {2}},\ \bibinfo {pages}
  {010102} (\bibinfo {year} {2021})}\BibitemShut {NoStop}%
\bibitem [{\citenamefont {Krastanov}\ \emph {et~al.}(2019)\citenamefont
  {Krastanov}, \citenamefont {Zhou}, \citenamefont {Flammia},\ and\
  \citenamefont {Jiang}}]{krastanov_stochastic_2019}%
  \BibitemOpen
  \bibfield  {author} {\bibinfo {author} {\bibfnamefont {S.}~\bibnamefont
  {Krastanov}}, \bibinfo {author} {\bibfnamefont {S.}~\bibnamefont {Zhou}},
  \bibinfo {author} {\bibfnamefont {S.~T.}\ \bibnamefont {Flammia}}, \ and\
  \bibinfo {author} {\bibfnamefont {L.}~\bibnamefont {Jiang}},\ }\bibinfo
  {title} {\emph {Stochastic {estimation} of {dynamical} {variables}}},\ \href
  {\doibase 10.1088/2058-9565/ab18d5} {\bibfield  {journal} {\bibinfo
  {journal} {Quant Sc. Tech.}\ }\textbf {\bibinfo {volume} {4}},\ \bibinfo
  {pages} {035003} (\bibinfo {year} {2019})}\BibitemShut {NoStop}%
\bibitem [{\citenamefont {Liao}\ \emph {et~al.}(2019)\citenamefont {Liao},
  \citenamefont {Liu}, \citenamefont {Wang},\ and\ \citenamefont
  {Xiang}}]{liao_differentiable_2019}%
  \BibitemOpen
  \bibfield  {author} {\bibinfo {author} {\bibfnamefont {H.-J.}\ \bibnamefont
  {Liao}}, \bibinfo {author} {\bibfnamefont {J.-G.}\ \bibnamefont {Liu}},
  \bibinfo {author} {\bibfnamefont {L.}~\bibnamefont {Wang}}, \ and\ \bibinfo
  {author} {\bibfnamefont {T.}~\bibnamefont {Xiang}},\ }\bibinfo {title} {\emph
  {Differentiable {programming} {tensor} {networks}}},\ \href {\doibase
  10.1103/PhysRevX.9.031041} {\bibfield  {journal} {\bibinfo  {journal} {Phys.
  Rev. X}\ }\textbf {\bibinfo {volume} {9}},\ \bibinfo {pages} {031041}
  (\bibinfo {year} {2019})}\BibitemShut {NoStop}%
\bibitem [{Note1()}]{Note1}%
  \BibitemOpen
  \bibinfo {note} {See Ref.~\cite {Carrasco_Verification_2021} for a recent
  perspective article.}\BibitemShut {Stop}%
\bibitem [{\citenamefont {Bairey}\ \emph {et~al.}(2019)\citenamefont {Bairey},
  \citenamefont {Arad},\ and\ \citenamefont {Lindner}}]{bairey_learning_2019}%
  \BibitemOpen
  \bibfield  {author} {\bibinfo {author} {\bibfnamefont {E.}~\bibnamefont
  {Bairey}}, \bibinfo {author} {\bibfnamefont {I.}~\bibnamefont {Arad}}, \ and\
  \bibinfo {author} {\bibfnamefont {N.~H.}\ \bibnamefont {Lindner}},\ }\bibinfo
  {title} {\emph {Learning a {local} {Hamiltonian} from {local}
  {measurements}}},\ \href {\doibase 10.1103/PhysRevLett.122.020504} {\bibfield
   {journal} {\bibinfo  {journal} {Phys. Rev. Lett.}\ }\textbf {\bibinfo
  {volume} {122}},\ \bibinfo {pages} {020504} (\bibinfo {year}
  {2019})}\BibitemShut {NoStop}%
\bibitem [{\citenamefont {{Bairey}}\ \emph {et~al.}(2020)\citenamefont
  {{Bairey}}, \citenamefont {{Guo}}, \citenamefont {{Poletti}}, \citenamefont
  {{Lindner}},\ and\ \citenamefont {{Arad}}}]{BaireyNJP}%
  \BibitemOpen
  \bibfield  {author} {\bibinfo {author} {\bibfnamefont {E.}~\bibnamefont
  {{Bairey}}}, \bibinfo {author} {\bibfnamefont {C.}~\bibnamefont {{Guo}}},
  \bibinfo {author} {\bibfnamefont {D.}~\bibnamefont {{Poletti}}}, \bibinfo
  {author} {\bibfnamefont {N.~H.}\ \bibnamefont {{Lindner}}}, \ and\ \bibinfo
  {author} {\bibfnamefont {I.}~\bibnamefont {{Arad}}},\ }\bibinfo {title}
  {\emph {{Learning the dynamics of open quantum systems from their steady
  states}}},\ \href {\doibase 10.1088/1367-2630/ab73cd} {\bibfield  {journal}
  {\bibinfo  {journal} {New J. Phys.}\ }\textbf {\bibinfo {volume} {22}},\
  \bibinfo {eid} {032001} (\bibinfo {year} {2020})}\BibitemShut {NoStop}%
\bibitem [{\citenamefont {Anshu}\ \emph {et~al.}(2021)\citenamefont {Anshu},
  \citenamefont {Arunachalam}, \citenamefont {Kuwahara},\ and\ \citenamefont
  {Soleimanifar}}]{anshu_sample-efficient_2021}%
  \BibitemOpen
  \bibfield  {author} {\bibinfo {author} {\bibfnamefont {A.}~\bibnamefont
  {Anshu}}, \bibinfo {author} {\bibfnamefont {S.}~\bibnamefont {Arunachalam}},
  \bibinfo {author} {\bibfnamefont {T.}~\bibnamefont {Kuwahara}}, \ and\
  \bibinfo {author} {\bibfnamefont {M.}~\bibnamefont {Soleimanifar}},\
  }\bibinfo {title} {\emph {Sample-efficient learning of interacting quantum
  systems}},\ \href {\doibase 10.1038/s41567-021-01232-0} {\bibfield  {journal}
  {\bibinfo  {journal} {Nat. Phys.}\ ,\ \bibinfo {pages} {1}} (\bibinfo {year}
  {2021})}\BibitemShut {NoStop}%
\bibitem [{\citenamefont {Haah}\ \emph {et~al.}(2021)\citenamefont {Haah},
  \citenamefont {Kothari},\ and\ \citenamefont {Tang}}]{haah_optimal_2021}%
  \BibitemOpen
  \bibfield  {author} {\bibinfo {author} {\bibfnamefont {J.}~\bibnamefont
  {Haah}}, \bibinfo {author} {\bibfnamefont {R.}~\bibnamefont {Kothari}}, \
  and\ \bibinfo {author} {\bibfnamefont {E.}~\bibnamefont {Tang}},\ }\bibinfo
  {title} {\emph {Optimal learning of quantum {{Hamiltonians}} from
  high-temperature {{Gibbs}} states}},\ \href@noop {} {\  (\bibinfo {year}
  {2021})},\ \Eprint {http://arxiv.org/abs/2108.04842} {arXiv:2108.04842
  [quant-ph]}\BibitemShut {NoStop}%
\bibitem [{\citenamefont {Qi}\ and\ \citenamefont
  {Ranard}(2019)}]{qi_determining_2019}%
  \BibitemOpen
  \bibfield  {author} {\bibinfo {author} {\bibfnamefont {X.-L.}\ \bibnamefont
  {Qi}}\ and\ \bibinfo {author} {\bibfnamefont {D.}~\bibnamefont {Ranard}},\
  }\bibinfo {title} {\emph {Determining a Local {{Hamiltonian}} from a Single
  Eigenstate}},\ \href {\doibase 10.22331/q-2019-07-08-159} {\bibfield
  {journal} {\bibinfo  {journal} {Quantum}\ }\textbf {\bibinfo {volume} {3}},\
  \bibinfo {pages} {159} (\bibinfo {year} {2019})}\BibitemShut {NoStop}%
\bibitem [{\citenamefont {Zubida}\ \emph {et~al.}(2021)\citenamefont {Zubida},
  \citenamefont {Yitzhaki}, \citenamefont {Lindner},\ and\ \citenamefont
  {Bairey}}]{zubida_optimized_2021}%
  \BibitemOpen
  \bibfield  {author} {\bibinfo {author} {\bibfnamefont {A.}~\bibnamefont
  {Zubida}}, \bibinfo {author} {\bibfnamefont {E.}~\bibnamefont {Yitzhaki}},
  \bibinfo {author} {\bibfnamefont {N.~H.}\ \bibnamefont {Lindner}}, \ and\
  \bibinfo {author} {\bibfnamefont {E.}~\bibnamefont {Bairey}},\ }\bibinfo
  {title} {\emph {Optimized {Hamiltonian} learning from short-time
  measurements}},\ \href@noop {} {\  (\bibinfo {year} {2021})},\ \Eprint
  {http://arxiv.org/abs/2108.08824} {arXiv:2108.08824}\BibitemShut {NoStop}%
\bibitem [{\citenamefont {França}\ \emph {et~al.}(2022)\citenamefont
  {França}, \citenamefont {Markovich}, \citenamefont {Dobrovitski},
  \citenamefont {Werner},\ and\ \citenamefont
  {Borregaard}}]{franca_efficient_2022}%
  \BibitemOpen
  \bibfield  {author} {\bibinfo {author} {\bibfnamefont {D.~S.}\ \bibnamefont
  {França}}, \bibinfo {author} {\bibfnamefont {L.~A.}\ \bibnamefont
  {Markovich}}, \bibinfo {author} {\bibfnamefont {V.~V.}\ \bibnamefont
  {Dobrovitski}}, \bibinfo {author} {\bibfnamefont {A.~H.}\ \bibnamefont
  {Werner}}, \ and\ \bibinfo {author} {\bibfnamefont {J.}~\bibnamefont
  {Borregaard}},\ }\bibinfo {title} {\emph {Efficient and robust estimation of
  many-qubit {Hamiltonians}}},\ \href@noop {} {\  (\bibinfo {year} {2022})},\
  \Eprint {http://arxiv.org/abs/2205.09567} {arXiv:2205.09567}\BibitemShut
  {NoStop}%
\bibitem [{\citenamefont {Gu}\ \emph {et~al.}(2022)\citenamefont {Gu},
  \citenamefont {Cincio},\ and\ \citenamefont {Coles}}]{gu_practical_2022}%
  \BibitemOpen
  \bibfield  {author} {\bibinfo {author} {\bibfnamefont {A.}~\bibnamefont
  {Gu}}, \bibinfo {author} {\bibfnamefont {L.}~\bibnamefont {Cincio}}, \ and\
  \bibinfo {author} {\bibfnamefont {P.~J.}\ \bibnamefont {Coles}},\ }\bibinfo
  {title} {\emph {Practical {black} {box} {Hamiltonian} {learning}}},\
  \href@noop {} {\  (\bibinfo {year} {2022})},\ \Eprint
  {http://arxiv.org/abs/2206.15464} {arXiv:2206.15464}\BibitemShut {NoStop}%
\bibitem [{\citenamefont {Harper}\ \emph {et~al.}(2021)\citenamefont {Harper},
  \citenamefont {Yu},\ and\ \citenamefont {Flammia}}]{harper_fast_2021}%
  \BibitemOpen
  \bibfield  {author} {\bibinfo {author} {\bibfnamefont {R.}~\bibnamefont
  {Harper}}, \bibinfo {author} {\bibfnamefont {W.}~\bibnamefont {Yu}}, \ and\
  \bibinfo {author} {\bibfnamefont {S.~T.}\ \bibnamefont {Flammia}},\ }\bibinfo
  {title} {\emph {Fast {{estimation}} of {{sparse quantum noise}}}},\ \href
  {\doibase 10.1103/PRXQuantum.2.010322} {\bibfield  {journal} {\bibinfo
  {journal} {PRX Quantum}\ }\textbf {\bibinfo {volume} {2}},\ \bibinfo {pages}
  {010322} (\bibinfo {year} {2021})}\BibitemShut {NoStop}%
\bibitem [{\citenamefont {Yu}\ \emph {et~al.}(2022)\citenamefont {Yu},
  \citenamefont {Sun}, \citenamefont {Han},\ and\ \citenamefont
  {Yuan}}]{yu_practical_2022}%
  \BibitemOpen
  \bibfield  {author} {\bibinfo {author} {\bibfnamefont {W.}~\bibnamefont
  {Yu}}, \bibinfo {author} {\bibfnamefont {J.}~\bibnamefont {Sun}}, \bibinfo
  {author} {\bibfnamefont {Z.}~\bibnamefont {Han}}, \ and\ \bibinfo {author}
  {\bibfnamefont {X.}~\bibnamefont {Yuan}},\ }\bibinfo {title} {\emph
  {Practical and {efficient} {Hamiltonian} {learning}}},\ \href@noop {} {\
  (\bibinfo {year} {2022})},\ \Eprint {http://arxiv.org/abs/2201.00190}
  {arXiv:2201.00190}\BibitemShut {NoStop}%
\bibitem [{\citenamefont {Oi}\ and\ \citenamefont
  {Schirmer}(2012)}]{oi_quantum_2012}%
  \BibitemOpen
  \bibfield  {author} {\bibinfo {author} {\bibfnamefont {D.~K.~L.}\
  \bibnamefont {Oi}}\ and\ \bibinfo {author} {\bibfnamefont {S.~G.}\
  \bibnamefont {Schirmer}},\ }\bibinfo {title} {\emph {Quantum system
  characterization with limited resources}},\ \href {\doibase
  10.1098/rsta.2011.0530} {\bibfield  {journal} {\bibinfo  {journal} {Phil.
  Trans. R. Soc. A}\ }\textbf {\bibinfo {volume} {370}},\ \bibinfo {pages}
  {5386} (\bibinfo {year} {2012})}\BibitemShut {NoStop}%
\bibitem [{\citenamefont {Burgarth}\ \emph {et~al.}(2011)\citenamefont
  {Burgarth}, \citenamefont {Maruyama},\ and\ \citenamefont
  {Nori}}]{burgarth_indirect_2011}%
  \BibitemOpen
  \bibfield  {author} {\bibinfo {author} {\bibfnamefont {D.}~\bibnamefont
  {Burgarth}}, \bibinfo {author} {\bibfnamefont {K.}~\bibnamefont {Maruyama}},
  \ and\ \bibinfo {author} {\bibfnamefont {F.}~\bibnamefont {Nori}},\ }\bibinfo
  {title} {\emph {Indirect quantum tomography of quadratic {{Hamiltonians}}}},\
  \href {\doibase 10.1088/1367-2630/13/1/013019} {\bibfield  {journal}
  {\bibinfo  {journal} {New J. Phys.}\ }\textbf {\bibinfo {volume} {13}},\
  \bibinfo {pages} {013019} (\bibinfo {year} {2011})}\BibitemShut {NoStop}%
\bibitem [{\citenamefont {Hangleiter}\ \emph {et~al.}(2021)\citenamefont
  {Hangleiter}, \citenamefont {Roth}, \citenamefont {Eisert},\ and\
  \citenamefont {Roushan}}]{GoogleHamiltonianLearningShort}%
  \BibitemOpen
  \bibfield  {author} {\bibinfo {author} {\bibfnamefont {D.}~\bibnamefont
  {Hangleiter}}, \bibinfo {author} {\bibfnamefont {I.}~\bibnamefont {Roth}},
  \bibinfo {author} {\bibfnamefont {J.}~\bibnamefont {Eisert}}, \ and\ \bibinfo
  {author} {\bibfnamefont {P.}~\bibnamefont {Roushan}},\ }\bibinfo {title}
  {\emph {Precise Hamiltonian identification of a superconducting quantum
  processor}},\ \href@noop {} {\  (\bibinfo {year} {2021})},\ \Eprint
  {http://arxiv.org/abs/2108.08319} {arXiv:2108.08319}\BibitemShut {NoStop}%
\bibitem [{\citenamefont {Zhang}\ and\ \citenamefont
  {Sarovar}(2014)}]{zhang_quantum_2014}%
  \BibitemOpen
  \bibfield  {author} {\bibinfo {author} {\bibfnamefont {J.}~\bibnamefont
  {Zhang}}\ and\ \bibinfo {author} {\bibfnamefont {M.}~\bibnamefont
  {Sarovar}},\ }\bibinfo {title} {\emph {Quantum {{Hamiltonian identification}}
  from {{measurement time traces}}}},\ \href {\doibase
  10.1103/PhysRevLett.113.080401} {\bibfield  {journal} {\bibinfo  {journal}
  {Phys. Rev. Lett.}\ }\textbf {\bibinfo {volume} {113}},\ \bibinfo {pages}
  {080401} (\bibinfo {year} {2014})}\BibitemShut {NoStop}%
\bibitem [{\citenamefont {Chen}\ \emph {et~al.}(2021)\citenamefont {Chen},
  \citenamefont {Li}, \citenamefont {Wu}, \citenamefont {Liu}, \citenamefont
  {Li},\ and\ \citenamefont {Zhou}}]{chen_experimental_2021}%
  \BibitemOpen
  \bibfield  {author} {\bibinfo {author} {\bibfnamefont {X.}~\bibnamefont
  {Chen}}, \bibinfo {author} {\bibfnamefont {Y.}~\bibnamefont {Li}}, \bibinfo
  {author} {\bibfnamefont {Z.}~\bibnamefont {Wu}}, \bibinfo {author}
  {\bibfnamefont {R.}~\bibnamefont {Liu}}, \bibinfo {author} {\bibfnamefont
  {Z.}~\bibnamefont {Li}}, \ and\ \bibinfo {author} {\bibfnamefont
  {H.}~\bibnamefont {Zhou}},\ }\bibinfo {title} {\emph {Experimental
  realization of {{Hamiltonian}} tomography by quantum quenches}},\ \href
  {\doibase 10.1103/PhysRevA.103.042429} {\bibfield  {journal} {\bibinfo
  {journal} {Phys. Rev. A}\ }\textbf {\bibinfo {volume} {103}},\ \bibinfo
  {pages} {042429} (\bibinfo {year} {2021})}\BibitemShut {NoStop}%
\bibitem [{\citenamefont {Li}\ \emph {et~al.}(2020)\citenamefont {Li},
  \citenamefont {Zou},\ and\ \citenamefont {Hsieh}}]{li_hamiltonian_2020}%
  \BibitemOpen
  \bibfield  {author} {\bibinfo {author} {\bibfnamefont {Z.}~\bibnamefont
  {Li}}, \bibinfo {author} {\bibfnamefont {L.}~\bibnamefont {Zou}}, \ and\
  \bibinfo {author} {\bibfnamefont {T.~H.}\ \bibnamefont {Hsieh}},\ }\bibinfo
  {title} {\emph {Hamiltonian {tomography} via {quantum} {quench}}},\ \href
  {\doibase 10.1103/PhysRevLett.124.160502} {\bibfield  {journal} {\bibinfo
  {journal} {Phys. Rev. Lett.}\ }\textbf {\bibinfo {volume} {124}},\ \bibinfo
  {pages} {160502} (\bibinfo {year} {2020})}\BibitemShut {NoStop}%
\bibitem [{\citenamefont {Pastori}\ \emph {et~al.}(2022)\citenamefont
  {Pastori}, \citenamefont {Olsacher}, \citenamefont {Kokail},\ and\
  \citenamefont {Zoller}}]{pastori_characterization_2022}%
  \BibitemOpen
  \bibfield  {author} {\bibinfo {author} {\bibfnamefont {L.}~\bibnamefont
  {Pastori}}, \bibinfo {author} {\bibfnamefont {T.}~\bibnamefont {Olsacher}},
  \bibinfo {author} {\bibfnamefont {C.}~\bibnamefont {Kokail}}, \ and\ \bibinfo
  {author} {\bibfnamefont {P.}~\bibnamefont {Zoller}},\ }\bibinfo {title}
  {\emph {Characterization and verification of Trotterized digital quantum
  simulation via Hamiltonian and Liouvillian learning}},\ \href {\doibase
  10.1103/PRXQuantum.3.030324} {\bibfield  {journal} {\bibinfo  {journal} {PRX
  Quantum}\ }\textbf {\bibinfo {volume} {3}},\ \bibinfo {pages} {030324}
  (\bibinfo {year} {2022})}\BibitemShut {NoStop}%
\bibitem [{\citenamefont {Bienias}\ \emph {et~al.}(2021)\citenamefont
  {Bienias}, \citenamefont {Seif},\ and\ \citenamefont
  {Hafezi}}]{bienias_meta_2021}%
  \BibitemOpen
  \bibfield  {author} {\bibinfo {author} {\bibfnamefont {P.}~\bibnamefont
  {Bienias}}, \bibinfo {author} {\bibfnamefont {A.}~\bibnamefont {Seif}}, \
  and\ \bibinfo {author} {\bibfnamefont {M.}~\bibnamefont {Hafezi}},\ }\bibinfo
  {title} {\emph {Meta {{Hamiltonian learning}}}},\ \href@noop {} {\  (\bibinfo
  {year} {2021})},\ \Eprint {http://arxiv.org/abs/2104.04453}
  {arXiv:2104.04453}\BibitemShut {NoStop}%
\bibitem [{\citenamefont {Schuster}\ \emph {et~al.}(2022)\citenamefont
  {Schuster}, \citenamefont {Niu}, \citenamefont {Cotler}, \citenamefont
  {O'Brien}, \citenamefont {McClean},\ and\ \citenamefont
  {Mohseni}}]{schuster_learning_2022}%
  \BibitemOpen
  \bibfield  {author} {\bibinfo {author} {\bibfnamefont {T.}~\bibnamefont
  {Schuster}}, \bibinfo {author} {\bibfnamefont {M.}~\bibnamefont {Niu}},
  \bibinfo {author} {\bibfnamefont {J.}~\bibnamefont {Cotler}}, \bibinfo
  {author} {\bibfnamefont {T.}~\bibnamefont {O'Brien}}, \bibinfo {author}
  {\bibfnamefont {J.~R.}\ \bibnamefont {McClean}}, \ and\ \bibinfo {author}
  {\bibfnamefont {M.}~\bibnamefont {Mohseni}},\ }\bibinfo {title} {\emph
  {Learning quantum systems via out-of-time-order correlators}},\ \href@noop {}
  {\  (\bibinfo {year} {2022})},\ \Eprint {http://arxiv.org/abs/2208.02254}
  {arXiv:2208.02254}\BibitemShut {NoStop}%
\bibitem [{\citenamefont {Mohseni}\ \emph {et~al.}(2022)\citenamefont
  {Mohseni}, \citenamefont {F{\"{o}}sel}, \citenamefont {Guo}, \citenamefont
  {Navarrete-Benlloch},\ and\ \citenamefont {Marquardt}}]{mohseni_deep_2021}%
  \BibitemOpen
  \bibfield  {author} {\bibinfo {author} {\bibfnamefont {N.}~\bibnamefont
  {Mohseni}}, \bibinfo {author} {\bibfnamefont {T.}~\bibnamefont
  {F{\"{o}}sel}}, \bibinfo {author} {\bibfnamefont {L.}~\bibnamefont {Guo}},
  \bibinfo {author} {\bibfnamefont {C.}~\bibnamefont {Navarrete-Benlloch}}, \
  and\ \bibinfo {author} {\bibfnamefont {F.}~\bibnamefont {Marquardt}},\
  }\bibinfo {title} {\emph {Deep {l}earning of {q}uantum {m}any-{b}ody
  {d}ynamics via {r}andom {d}riving}},\ \href {\doibase
  10.22331/q-2022-05-17-714} {\bibfield  {journal} {\bibinfo  {journal}
  {{Quantum}}\ }\textbf {\bibinfo {volume} {6}},\ \bibinfo {pages} {714}
  (\bibinfo {year} {2022})}\BibitemShut {NoStop}%
\bibitem [{\citenamefont {Han}\ \emph {et~al.}(2021)\citenamefont {Han},
  \citenamefont {Glaz}, \citenamefont {Haile},\ and\ \citenamefont
  {Lai}}]{han_tomography_2021}%
  \BibitemOpen
  \bibfield  {author} {\bibinfo {author} {\bibfnamefont {C.-D.}\ \bibnamefont
  {Han}}, \bibinfo {author} {\bibfnamefont {B.}~\bibnamefont {Glaz}}, \bibinfo
  {author} {\bibfnamefont {M.}~\bibnamefont {Haile}}, \ and\ \bibinfo {author}
  {\bibfnamefont {Y.-C.}\ \bibnamefont {Lai}},\ }\bibinfo {title} {\emph
  {Tomography of time-dependent quantum {Hamiltonians} with machine
  learning}},\ \href {\doibase 10.1103/PhysRevA.104.062404} {\bibfield
  {journal} {\bibinfo  {journal} {Phys. Rev. A}\ }\textbf {\bibinfo {volume}
  {104}},\ \bibinfo {pages} {062404} (\bibinfo {year} {2021})}\BibitemShut
  {NoStop}%
\bibitem [{\citenamefont {Valenti}\ \emph {et~al.}(2022)\citenamefont
  {Valenti}, \citenamefont {Jin}, \citenamefont {Léonard}, \citenamefont
  {Huber},\ and\ \citenamefont {Greplova}}]{valenti_scalable_2022}%
  \BibitemOpen
  \bibfield  {author} {\bibinfo {author} {\bibfnamefont {A.}~\bibnamefont
  {Valenti}}, \bibinfo {author} {\bibfnamefont {G.}~\bibnamefont {Jin}},
  \bibinfo {author} {\bibfnamefont {J.}~\bibnamefont {Léonard}}, \bibinfo
  {author} {\bibfnamefont {S.~D.}\ \bibnamefont {Huber}}, \ and\ \bibinfo
  {author} {\bibfnamefont {E.}~\bibnamefont {Greplova}},\ }\bibinfo {title}
  {\emph {Scalable {Hamiltonian} learning for large-scale out-of-equilibrium
  quantum dynamics}},\ \href {\doibase 10.1103/PhysRevA.105.023302} {\bibfield
  {journal} {\bibinfo  {journal} {Phys. Rev. A}\ }\textbf {\bibinfo {volume}
  {105}},\ \bibinfo {pages} {023302} (\bibinfo {year} {2022})}\BibitemShut
  {NoStop}%
\bibitem [{\citenamefont {Xie}\ \emph {et~al.}(2022)\citenamefont {Xie},
  \citenamefont {Dai}, \citenamefont {Yuan}, \citenamefont {Deng},
  \citenamefont {Li}, \citenamefont {Chen},\ and\ \citenamefont
  {Pan}}]{xie_bayesian_2022}%
  \BibitemOpen
  \bibfield  {author} {\bibinfo {author} {\bibfnamefont {Y.-J.}\ \bibnamefont
  {Xie}}, \bibinfo {author} {\bibfnamefont {H.-N.}\ \bibnamefont {Dai}},
  \bibinfo {author} {\bibfnamefont {Z.-S.}\ \bibnamefont {Yuan}}, \bibinfo
  {author} {\bibfnamefont {Y.}~\bibnamefont {Deng}}, \bibinfo {author}
  {\bibfnamefont {X.}~\bibnamefont {Li}}, \bibinfo {author} {\bibfnamefont
  {Y.-A.}\ \bibnamefont {Chen}}, \ and\ \bibinfo {author} {\bibfnamefont
  {J.-W.}\ \bibnamefont {Pan}},\ }\bibinfo {title} {\emph {Bayesian learning
  for optimal control of quantum many-body states in optical lattices}},\ \href
  {\doibase 10.1103/PhysRevA.106.013316} {\bibfield  {journal} {\bibinfo
  {journal} {Phys. Rev. A}\ }\textbf {\bibinfo {volume} {106}},\ \bibinfo
  {pages} {013316} (\bibinfo {year} {2022})}\BibitemShut {NoStop}%
\bibitem [{\citenamefont {Evans}\ \emph {et~al.}(2019)\citenamefont {Evans},
  \citenamefont {Harper},\ and\ \citenamefont {Flammia}}]{evans_scalable_2019}%
  \BibitemOpen
  \bibfield  {author} {\bibinfo {author} {\bibfnamefont {T.~J.}\ \bibnamefont
  {Evans}}, \bibinfo {author} {\bibfnamefont {R.}~\bibnamefont {Harper}}, \
  and\ \bibinfo {author} {\bibfnamefont {S.~T.}\ \bibnamefont {Flammia}},\
  }\bibinfo {title} {\emph {Scalable {Bayesian} {Hamiltonian} learning}},\
  \href@noop {} {\  (\bibinfo {year} {2019})},\ \Eprint
  {http://arxiv.org/abs/1912.07636} {arXiv:1912.07636}\BibitemShut {NoStop}%
\bibitem [{\citenamefont {Schollwöck}(2011)}]{schollwock_density-matrix_2011}%
  \BibitemOpen
  \bibfield  {author} {\bibinfo {author} {\bibfnamefont {U.}~\bibnamefont
  {Schollwöck}},\ }\bibinfo {title} {\emph {The density-matrix renormalization
  group in the age of matrix product states}},\ \href {\doibase
  10.1016/j.aop.2010.09.012} {\bibfield  {journal} {\bibinfo  {journal} {Ann.
  Phys.}\ }\bibinfo {series} {January 2011 {Special} {Issue}},\ \textbf
  {\bibinfo {volume} {326}},\ \bibinfo {pages} {96} (\bibinfo {year}
  {2011})}\BibitemShut {NoStop}%
\bibitem [{\citenamefont {Vidal}(2004)}]{Vidaltebd}%
  \BibitemOpen
  \bibfield  {author} {\bibinfo {author} {\bibfnamefont {G.}~\bibnamefont
  {Vidal}},\ }\bibinfo {title} {\emph {Efficient simulation of one-dimensional
  quantum many-body systems}},\ \href {\doibase 10.1103/PhysRevLett.93.040502}
  {\bibfield  {journal} {\bibinfo  {journal} {Phys. Rev. Lett.}\ }\textbf
  {\bibinfo {volume} {93}},\ \bibinfo {pages} {040502} (\bibinfo {year}
  {2004})}\BibitemShut {NoStop}%
\bibitem [{\citenamefont {Zwolak}\ and\ \citenamefont
  {Vidal}(2004)}]{ZwolakPRL2004}%
  \BibitemOpen
  \bibfield  {author} {\bibinfo {author} {\bibfnamefont {M.}~\bibnamefont
  {Zwolak}}\ and\ \bibinfo {author} {\bibfnamefont {G.}~\bibnamefont {Vidal}},\
  }\bibinfo {title} {\emph {Mixed-state dynamics in one-dimensional quantum
  lattice systems: A time-dependent superoperator renormalization algorithm}},\
  \href {\doibase 10.1103/PhysRevLett.93.207205} {\bibfield  {journal}
  {\bibinfo  {journal} {Phys. Rev. Lett.}\ }\textbf {\bibinfo {volume} {93}},\
  \bibinfo {pages} {207205} (\bibinfo {year} {2004})}\BibitemShut {NoStop}%
\bibitem [{\citenamefont {Kshetrimayum}\ \emph {et~al.}(2020)\citenamefont
  {Kshetrimayum}, \citenamefont {Goihl},\ and\ \citenamefont
  {Eisert}}]{Kshetrimayum2dMBL2020}%
  \BibitemOpen
  \bibfield  {author} {\bibinfo {author} {\bibfnamefont {A.}~\bibnamefont
  {Kshetrimayum}}, \bibinfo {author} {\bibfnamefont {M.}~\bibnamefont {Goihl}},
  \ and\ \bibinfo {author} {\bibfnamefont {J.}~\bibnamefont {Eisert}},\
  }\bibinfo {title} {\emph {Time evolution of many-body localized systems in
  two spatial dimensions}},\ \href {\doibase 10.1103/PhysRevB.102.235132}
  {\bibfield  {journal} {\bibinfo  {journal} {Phys. Rev. B}\ }\textbf {\bibinfo
  {volume} {102}},\ \bibinfo {pages} {235132} (\bibinfo {year}
  {2020})}\BibitemShut {NoStop}%
\bibitem [{\citenamefont {Hubig}\ and\ \citenamefont
  {Cirac}(2019)}]{Hubigpepsdynamics}%
  \BibitemOpen
  \bibfield  {author} {\bibinfo {author} {\bibfnamefont {C.}~\bibnamefont
  {Hubig}}\ and\ \bibinfo {author} {\bibfnamefont {J.~I.}\ \bibnamefont
  {Cirac}},\ }\bibinfo {title} {\emph {{Time-dependent study of disordered
  models with infinite projected entangled pair states}}},\ \href {\doibase
  10.21468/SciPostPhys.6.3.031} {\bibfield  {journal} {\bibinfo  {journal}
  {SciPost Phys.}\ }\textbf {\bibinfo {volume} {6}},\ \bibinfo {pages} {031}
  (\bibinfo {year} {2019})}\BibitemShut {NoStop}%
\bibitem [{\citenamefont {Kshetrimayum}\ \emph {et~al.}(2021)\citenamefont
  {Kshetrimayum}, \citenamefont {Goihl}, \citenamefont {Kennes},\ and\
  \citenamefont {Eisert}}]{Kshetrimayum2dTC2021}%
  \BibitemOpen
  \bibfield  {author} {\bibinfo {author} {\bibfnamefont {A.}~\bibnamefont
  {Kshetrimayum}}, \bibinfo {author} {\bibfnamefont {M.}~\bibnamefont {Goihl}},
  \bibinfo {author} {\bibfnamefont {D.~M.}\ \bibnamefont {Kennes}}, \ and\
  \bibinfo {author} {\bibfnamefont {J.}~\bibnamefont {Eisert}},\ }\bibinfo
  {title} {\emph {Quantum time crystals with programmable disorder in higher
  dimensions}},\ \href {\doibase 10.1103/PhysRevB.103.224205} {\bibfield
  {journal} {\bibinfo  {journal} {Phys. Rev. B}\ }\textbf {\bibinfo {volume}
  {103}},\ \bibinfo {pages} {224205} (\bibinfo {year} {2021})}\BibitemShut
  {NoStop}%
\bibitem [{\citenamefont {Bradbury}\ \emph {et~al.}(2018)\citenamefont
  {Bradbury}, \citenamefont {Frostig}, \citenamefont {Hawkins}, \citenamefont
  {Johnson}, \citenamefont {Leary}, \citenamefont {Maclaurin}, \citenamefont
  {Necula}, \citenamefont {Paszke}, \citenamefont {Vander{P}las}, \citenamefont
  {Wanderman-{M}ilne},\ and\ \citenamefont {Zhang}}]{jax2018github}%
  \BibitemOpen
  \bibfield  {author} {\bibinfo {author} {\bibfnamefont {J.}~\bibnamefont
  {Bradbury}}, \bibinfo {author} {\bibfnamefont {R.}~\bibnamefont {Frostig}},
  \bibinfo {author} {\bibfnamefont {P.}~\bibnamefont {Hawkins}}, \bibinfo
  {author} {\bibfnamefont {M.~J.}\ \bibnamefont {Johnson}}, \bibinfo {author}
  {\bibfnamefont {C.}~\bibnamefont {Leary}}, \bibinfo {author} {\bibfnamefont
  {D.}~\bibnamefont {Maclaurin}}, \bibinfo {author} {\bibfnamefont
  {G.}~\bibnamefont {Necula}}, \bibinfo {author} {\bibfnamefont
  {A.}~\bibnamefont {Paszke}}, \bibinfo {author} {\bibfnamefont
  {J.}~\bibnamefont {Vander{P}las}}, \bibinfo {author} {\bibfnamefont
  {S.}~\bibnamefont {Wanderman-{M}ilne}}, \ and\ \bibinfo {author}
  {\bibfnamefont {Q.}~\bibnamefont {Zhang}},\ }\href
  {http://github.com/google/jax} {\bibinfo {title} {\emph {{JAX}: composable
  transformations of {P}ython+{N}um{P}y programs}},\ } (\bibinfo {year}
  {2018})\BibitemShut {NoStop}%
\bibitem [{git()}]{github_dynamical-hamiltonian-learning_nodate}%
  \BibitemOpen
  \href
  {https://github.com/frederikwilde/scalable-dynamical-hamiltonian-learning}
  {\bibinfo {title} {\emph
  {github.com/frederikwilde/scalable-dynamical-hamiltonian-learning}},\
  }\BibitemShut {NoStop}%
\bibitem [{Note2()}]{Note2}%
  \BibitemOpen
  \bibinfo {note} {We use the Runge-Kutta ODE solver implemented in \protect
  \texttt {scipy.integrate.solve\protect \_ivp} to integrate the
  Schr{\"o}dinger equation in its exact form on the Hilbert space of dimension
  $2^n$.}\BibitemShut {Stop}%
\bibitem [{\citenamefont {Han}\ \emph {et~al.}(2018)\citenamefont {Han},
  \citenamefont {Wang}, \citenamefont {Fan}, \citenamefont {Wang},\ and\
  \citenamefont {Zhang}}]{han_unsupervised_2018}%
  \BibitemOpen
  \bibfield  {author} {\bibinfo {author} {\bibfnamefont {Z.-Y.}\ \bibnamefont
  {Han}}, \bibinfo {author} {\bibfnamefont {J.}~\bibnamefont {Wang}}, \bibinfo
  {author} {\bibfnamefont {H.}~\bibnamefont {Fan}}, \bibinfo {author}
  {\bibfnamefont {L.}~\bibnamefont {Wang}}, \ and\ \bibinfo {author}
  {\bibfnamefont {P.}~\bibnamefont {Zhang}},\ }\bibinfo {title} {\emph
  {Unsupervised {generative} {modeling} {using} {matrix} {product} {states}}},\
  \href {\doibase 10.1103/PhysRevX.8.031012} {\bibfield  {journal} {\bibinfo
  {journal} {Phys. Rev. X}\ }\textbf {\bibinfo {volume} {8}},\ \bibinfo {pages}
  {031012} (\bibinfo {year} {2018})}\BibitemShut {NoStop}%
\bibitem [{\citenamefont {Ferris}\ and\ \citenamefont
  {Vidal}(2012)}]{ferris_sampling_2012}%
  \BibitemOpen
  \bibfield  {author} {\bibinfo {author} {\bibfnamefont {A.~J.}\ \bibnamefont
  {Ferris}}\ and\ \bibinfo {author} {\bibfnamefont {G.}~\bibnamefont {Vidal}},\
  }\bibinfo {title} {\emph {Perfect sampling with unitary tensor networks}},\
  \href {\doibase 10.1103/PhysRevB.85.165146} {\bibfield  {journal} {\bibinfo
  {journal} {Phys. Rev. B}\ }\textbf {\bibinfo {volume} {85}},\ \bibinfo
  {pages} {165146} (\bibinfo {year} {2012})}\BibitemShut {NoStop}%
\bibitem [{\citenamefont {Gel\ss}\ \emph {et~al.}(2019)\citenamefont {Gel\ss},
  \citenamefont {Klus}, \citenamefont {Eisert},\ and\ \citenamefont
  {Sch{\"u}tte}}]{Klus}%
  \BibitemOpen
  \bibfield  {author} {\bibinfo {author} {\bibfnamefont {P.}~\bibnamefont
  {Gel\ss}}, \bibinfo {author} {\bibfnamefont {S.}~\bibnamefont {Klus}},
  \bibinfo {author} {\bibfnamefont {J.}~\bibnamefont {Eisert}}, \ and\ \bibinfo
  {author} {\bibfnamefont {C.}~\bibnamefont {Sch{\"u}tte}},\ }\bibinfo {title}
  {\emph {Multidimensional approximation of nonlinear dynamical systems}},\
  \href {\doibase 10.1115/1.4043148} {\bibfield  {journal} {\bibinfo  {journal}
  {J. Comput. Nonlinear Dynam.}\ }\textbf {\bibinfo {volume} {14}},\ \bibinfo
  {pages} {061006} (\bibinfo {year} {2019})}\BibitemShut {NoStop}%
\bibitem [{\citenamefont {Goe{\ss}mann}\ \emph {et~al.}(2020)\citenamefont
  {Goe{\ss}mann}, \citenamefont {G{\"o}tte}, \citenamefont {Roth},
  \citenamefont {Sweke}, \citenamefont {Kutyniok},\ and\ \citenamefont
  {Eisert}}]{NonlinearDynamicalLaws}%
  \BibitemOpen
  \bibfield  {author} {\bibinfo {author} {\bibfnamefont {A.}~\bibnamefont
  {Goe{\ss}mann}}, \bibinfo {author} {\bibfnamefont {M.}~\bibnamefont
  {G{\"o}tte}}, \bibinfo {author} {\bibfnamefont {I.}~\bibnamefont {Roth}},
  \bibinfo {author} {\bibfnamefont {R.}~\bibnamefont {Sweke}}, \bibinfo
  {author} {\bibfnamefont {G.}~\bibnamefont {Kutyniok}}, \ and\ \bibinfo
  {author} {\bibfnamefont {J.}~\bibnamefont {Eisert}},\ }\bibinfo {title}
  {\emph {Tensor network approaches for learning non-linear dynamical laws}},\
  \href@noop {} {\  (\bibinfo {year} {2020})},\ \Eprint
  {http://arxiv.org/abs/2002.12388} {arXiv:2002.12388}\BibitemShut {NoStop}%
\bibitem [{\citenamefont {Vollbrecht}\ and\ \citenamefont
  {Cirac}(2008)}]{Vollbrecht}%
  \BibitemOpen
  \bibfield  {author} {\bibinfo {author} {\bibfnamefont {K.~G.~H.}\
  \bibnamefont {Vollbrecht}}\ and\ \bibinfo {author} {\bibfnamefont {J.~I.}\
  \bibnamefont {Cirac}},\ }\bibinfo {title} {\emph {Quantum simulators,
  continuous-time automata, and translationally invariant systems}},\ \href
  {\doibase 10.1103/PhysRevLett.100.010501} {\bibfield  {journal} {\bibinfo
  {journal} {Phys. Rev. Lett.}\ }\textbf {\bibinfo {volume} {100}},\ \bibinfo
  {pages} {010501} (\bibinfo {year} {2008})}\BibitemShut {NoStop}%
\bibitem [{\citenamefont {Weimer}\ \emph {et~al.}(2021)\citenamefont {Weimer},
  \citenamefont {Kshetrimayum},\ and\ \citenamefont {Or\'us}}]{WeimerRMP}%
  \BibitemOpen
  \bibfield  {author} {\bibinfo {author} {\bibfnamefont {H.}~\bibnamefont
  {Weimer}}, \bibinfo {author} {\bibfnamefont {A.}~\bibnamefont
  {Kshetrimayum}}, \ and\ \bibinfo {author} {\bibfnamefont {R.}~\bibnamefont
  {Or\'us}},\ }\bibinfo {title} {\emph {Simulation methods for open quantum
  many-body systems}},\ \href {\doibase 10.1103/RevModPhys.93.015008}
  {\bibfield  {journal} {\bibinfo  {journal} {Rev. Mod. Phys.}\ }\textbf
  {\bibinfo {volume} {93}},\ \bibinfo {pages} {015008} (\bibinfo {year}
  {2021})}\BibitemShut {NoStop}%
\bibitem [{\citenamefont {Kshetrimayum}\ \emph {et~al.}(2017)\citenamefont
  {Kshetrimayum}, \citenamefont {Weimer},\ and\ \citenamefont
  {Or{\'u}s}}]{KshetrimayumNatComm2017}%
  \BibitemOpen
  \bibfield  {author} {\bibinfo {author} {\bibfnamefont {A.}~\bibnamefont
  {Kshetrimayum}}, \bibinfo {author} {\bibfnamefont {H.}~\bibnamefont
  {Weimer}}, \ and\ \bibinfo {author} {\bibfnamefont {R.}~\bibnamefont
  {Or{\'u}s}},\ }\bibinfo {title} {\emph {A simple tensor network algorithm for
  two-dimensional steady states}},\ \href {\doibase 10.1038/s41467-017-01511-6}
  {\bibfield  {journal} {\bibinfo  {journal} {Nature Comm.}\ }\textbf {\bibinfo
  {volume} {8}},\ \bibinfo {pages} {1291} (\bibinfo {year} {2017})}\BibitemShut
  {NoStop}%
\bibitem [{\citenamefont {Blatt}\ and\ \citenamefont
  {Roos}(2012)}]{blatt_quantum_2012}%
  \BibitemOpen
  \bibfield  {author} {\bibinfo {author} {\bibfnamefont {R.}~\bibnamefont
  {Blatt}}\ and\ \bibinfo {author} {\bibfnamefont {C.~F.}\ \bibnamefont
  {Roos}},\ }\bibinfo {title} {\emph {Quantum simulations with trapped ions}},\
  \href {\doibase 10.1038/nphys2252} {\bibfield  {journal} {\bibinfo  {journal}
  {Nature Phys.}\ }\textbf {\bibinfo {volume} {8}},\ \bibinfo {pages} {277}
  (\bibinfo {year} {2012})}\BibitemShut {NoStop}%
\bibitem [{\citenamefont {Bloch}\ \emph {et~al.}(2012)\citenamefont {Bloch},
  \citenamefont {Dalibard},\ and\ \citenamefont
  {Nascimbene}}]{BlochSimulation}%
  \BibitemOpen
  \bibfield  {author} {\bibinfo {author} {\bibfnamefont {I.}~\bibnamefont
  {Bloch}}, \bibinfo {author} {\bibfnamefont {J.}~\bibnamefont {Dalibard}}, \
  and\ \bibinfo {author} {\bibfnamefont {S.}~\bibnamefont {Nascimbene}},\
  }\bibinfo {title} {\emph {Quantum simulations with ultracold quantum
  gases}},\ \href {\doibase 10.1038/nphys2259} {\bibfield  {journal} {\bibinfo
  {journal} {Nature Phys.}\ }\textbf {\bibinfo {volume} {8}},\ \bibinfo {pages}
  {267} (\bibinfo {year} {2012})}\BibitemShut {NoStop}%
\bibitem [{\citenamefont {Bennett}\ \emph {et~al.}(2020)\citenamefont
  {Bennett}, \citenamefont {Melchers},\ and\ \citenamefont
  {Proppe}}]{hpc_zedat_fu}%
  \BibitemOpen
  \bibfield  {author} {\bibinfo {author} {\bibfnamefont {L.}~\bibnamefont
  {Bennett}}, \bibinfo {author} {\bibfnamefont {B.}~\bibnamefont {Melchers}}, \
  and\ \bibinfo {author} {\bibfnamefont {B.}~\bibnamefont {Proppe}},\
  }\href@noop {} {\bibinfo {title} {\emph {Curta: A General-purpose
  high-performance computer at {ZEDAT}, {F}reie {U}niversität {B}erlin}},\ }
  (\bibinfo {year} {2020})\BibitemShut {NoStop}%
\bibitem [{\citenamefont {Bellman}(1987)}]{bellmann_matrix_analysis}%
  \BibitemOpen
  \bibfield  {author} {\bibinfo {author} {\bibfnamefont {R.}~\bibnamefont
  {Bellman}},\ }\href@noop {} {\emph {\bibinfo {title} {Introduction to matrix
  analysis}}}\ (\bibinfo  {publisher} {Society for Industrial \& Applied
  Mathematics},\ \bibinfo {year} {1987})\BibitemShut {NoStop}%
\bibitem [{\citenamefont {Newey}\ and\ \citenamefont
  {McFadden}(2005)}]{newey_large_nodate}%
  \BibitemOpen
  \bibfield  {author} {\bibinfo {author} {\bibfnamefont {W.~K.}\ \bibnamefont
  {Newey}}\ and\ \bibinfo {author} {\bibfnamefont {D.}~\bibnamefont
  {McFadden}},\ }\bibinfo {title} {\emph {Large sample estimation and
  hypothesis}},\ \href {\doibase 10.1016/S1573-4412(05)80005-4} {\bibfield
  {journal} {\bibinfo  {journal} {Handbook Econom.}\ }\textbf {\bibinfo
  {volume} {36}},\ \bibinfo {pages} {135} (\bibinfo {year} {2005})}\BibitemShut
  {NoStop}%
\bibitem [{\citenamefont {Wan}\ and\ \citenamefont
  {Zhang}(2019)}]{wan_automatic_2019}%
  \BibitemOpen
  \bibfield  {author} {\bibinfo {author} {\bibfnamefont {Z.-Q.}\ \bibnamefont
  {Wan}}\ and\ \bibinfo {author} {\bibfnamefont {S.-X.}\ \bibnamefont
  {Zhang}},\ }\bibinfo {title} {\emph {Automatic {differentiation} for
  {complex} {valued} {SVD}}},\ \href@noop {} {\  (\bibinfo {year} {2019})},\
  \Eprint {http://arxiv.org/abs/1909.02659} {arXiv:1909.02659}\BibitemShut
  {NoStop}%
\bibitem [{\citenamefont {Liu}(2019)}]{liu_linear_2019}%
  \BibitemOpen
  \bibfield  {author} {\bibinfo {author} {\bibfnamefont {J.-G.}\ \bibnamefont
  {Liu}},\ }\href {https://giggleliu.github.io/2019/04/02/einsumbp.html}
  {\bibinfo {title} {\emph {Linear {algebra} {autodiff} (complex valued)}},\ }
  (\bibinfo {year} {2019})\BibitemShut {NoStop}%
\bibitem [{\citenamefont {Giles}(2007)}]{giles_extended_nodate}%
  \BibitemOpen
  \bibfield  {author} {\bibinfo {author} {\bibfnamefont {M.}~\bibnamefont
  {Giles}},\ }\href@noop {} {\bibinfo {title} {\emph {An extended collection of
  matrix derivative results for forward and reverse mode algorithmic
  differentiation}},\ } (\bibinfo {year} {2007}),\ \bibinfo {note} {tech. Rep.
  NA07}\BibitemShut {NoStop}%
\bibitem [{\citenamefont {Bartholomew-Biggs}\ \emph {et~al.}(2000)\citenamefont
  {Bartholomew-Biggs}, \citenamefont {Brown}, \citenamefont {Christianson},\
  and\ \citenamefont {Dixon}}]{bartholomew-biggs_automatic_2000}%
  \BibitemOpen
  \bibfield  {author} {\bibinfo {author} {\bibfnamefont {M.}~\bibnamefont
  {Bartholomew-Biggs}}, \bibinfo {author} {\bibfnamefont {S.}~\bibnamefont
  {Brown}}, \bibinfo {author} {\bibfnamefont {B.}~\bibnamefont {Christianson}},
  \ and\ \bibinfo {author} {\bibfnamefont {L.}~\bibnamefont {Dixon}},\
  }\bibinfo {title} {\emph {Automatic differentiation of algorithms}},\ \href
  {\doibase 10.1016/S0377-0427(00)00422-2} {\bibfield  {journal} {\bibinfo
  {journal} {J. Comp. Appl. Math.}\ }\textbf {\bibinfo {volume} {124}},\
  \bibinfo {pages} {171} (\bibinfo {year} {2000})}\BibitemShut {NoStop}%
\bibitem [{tot()}]{total_derivative_footnote}%
  \BibitemOpen
  \href@noop {} {}\bibinfo {note} {We can imagine $A$ being dependent on some
  parameter $p$ such that we can identify $\dd A$, $\dd U$, $\dd S$, and
  $\dd\dg{V}$ with $\partial A / \partial p$, $\partial U / \partial p$,
  $\partial S / \partial p$, and $\partial \dg{V} / \partial p$,
  respectively.}\BibitemShut {Stop}%
\bibitem [{\citenamefont {Townsend}(2016)}]{townsend_differentiating_2016}%
  \BibitemOpen
  \bibfield  {author} {\bibinfo {author} {\bibfnamefont {J.}~\bibnamefont
  {Townsend}},\ }\href {https://j-towns.github.io/papers/svd-derivative.pdf}
  {\bibinfo {title} {\emph {{D}ifferentiating the {singular} {value}
  {decomposition}}},\ } (\bibinfo {year} {2016})\BibitemShut {NoStop}%
\end{thebibliography}%

% \pagebreak

\appendix

\section{Details of the numerical computations}
\label{sec:implementation}
The key component of the loss function is the computation of probabilities of bit-strings.
This is achieved by creating an initial MPS in the $\ket{0}^{\otimes n}$ state vector and applying Trotter steps to it in an iterative fashion until the state has evolved up to the desired time.
We found a suitable value of the \emph{bond dimension}, used for all results presented here, to be $\chi = 30$.
For longer evolution times or a Hamiltonian that generates more entanglement, the bond dimension will have to be increased.
We implemented the \emph{second-order Trotter} formula, as this reduces the Trotter error with little computational overhead.
The time-evolved MPS can then be contracted with the post-measurement state, as depicted in Fig.~\ref{fig:setting}(b), to compute the probability of the corresponding bit-string.

To enable the efficient use of \emph{automatic differentiation} of the probabilities, it is important to keep the dimensions of the arrays fixed, during the execution of the TEBD algorithm.
Therefore the MPS is a fixed array of dimensions $(n, \chi, 2, \chi)$, where the physical dimension is 2 since we are considering a spin-1/2 system.
Furthermore, the algorithm provides several opportunities for \emph{vectorization}: we vectorize the probability computation over Pauli bases and bit-strings, the basis transformation of the MPS, and the application of individual Trotter gates.
Lastly, we \emph{just-in-time compile} the entire TEBD function.
This takes several seconds to complete in the first step (see Fig.~\ref{fig:timing}), but reduces the overall computation time significantly.
To limit excessive memory usage during back-propagation we \emph{checkpoint} each Trotter step.
In the case of mini-batch gradient descent checkpointing is not necessary, which reduces the time for evaluating the gradient.
However, for gradients on the full data set, without checkpointing the method quickly exceeds the memory constraints, as can be seen in Fig.~\ref{fig:timing}.
Even with checkpointing we had to iterate over chunks of the data set to compute the gradient, for larger data set sizes $d$.
For instance, for a system of $n=10$ spins consisting of $10^6$ samples per time step, we have divided the data set into 10 chunks which required about 22GiB of memory.
Therefore, we predict that for even larger data sets, system sizes, evolution times, and bond dimensions, due to memory requirements, it will become necessary to only optimize with stochastic gradients, perhaps with a tailored learning rate schedule.

Initially we investigated conjugate gradient descent, plain stochastic gradient descent, BFGS, and ADAM.
However the latter two appeared to be more reliable and faster in our context.
The results, shown in Fig.~\ref{fig:optimizers}, indicate that ADAM with a small batch size can quickly and reliably reduce the error, while BFGS can reduce the error reliably to the minimal value, however at longer computation times and with the necessity to evaluate the gradient w.r.t.\ the full data set.

Since the negative log-likelihood is a non-convex function, it is difficult to make guarantees on the number of iterations needed to find the global minimum.
However, we observe that the number of iterations shows only a weak dependence on the system size $n$, as shown in the left plot of Fig.~\ref{fig:timing}.
Therefore, the relevant figure of merit for determining the time complexity is the computation time needed for one iteration, i.e.\ one gradient evaluation.
As shown in the center plot of Fig.~\ref{fig:timing}, with sufficient memory this computation time scales linearly in system size.
Lastly, it is noteworthy that the system considered in our analysis the system size is connected to the number of parameters $\nu = 3 + n$.

\begin{figure}
    \includegraphics[width=.48\textwidth]{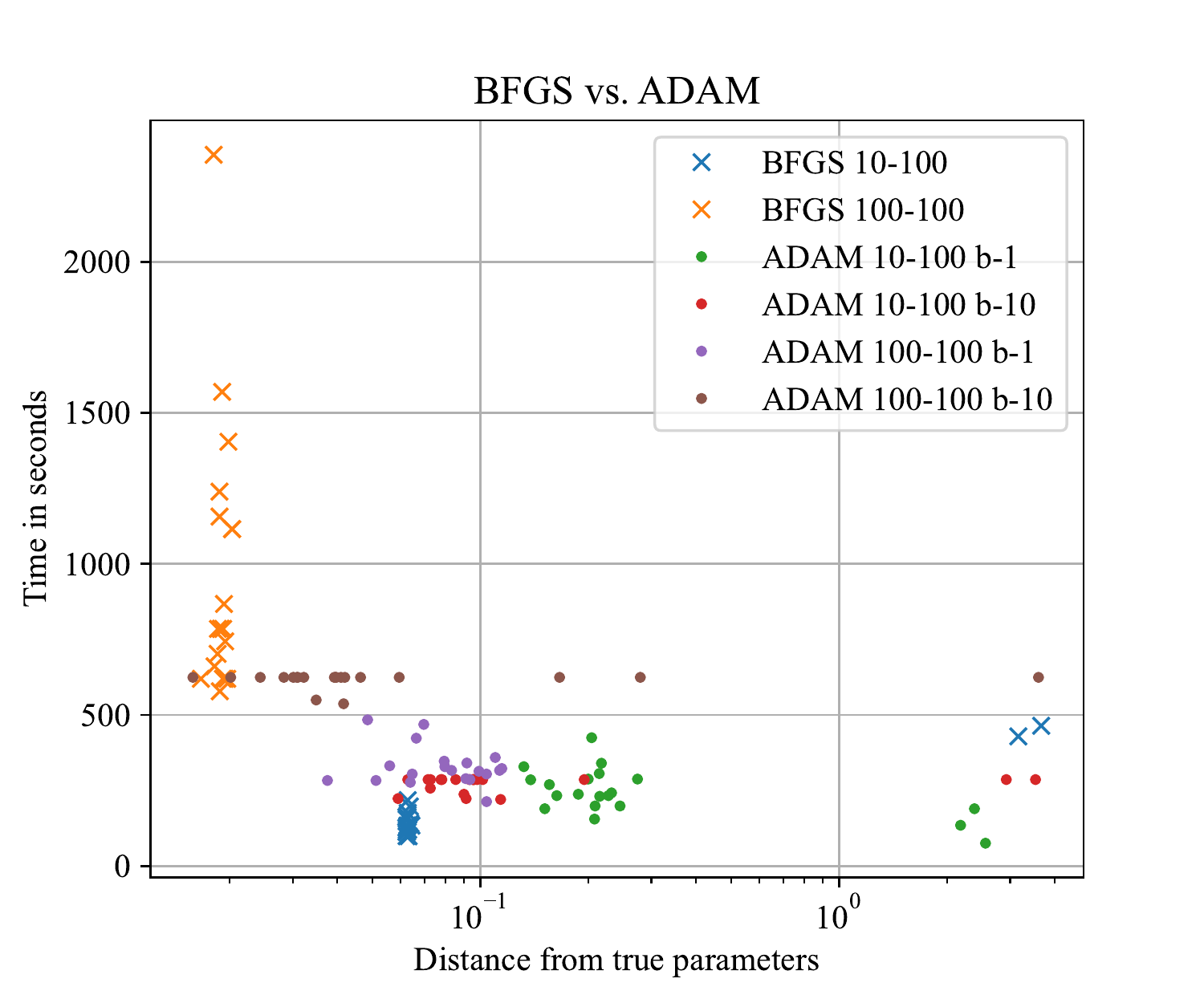}
    \caption{\label{fig:optimizers}
    Optimization times for different optimizers. The numbers $K$-$M$ denote the number of Pauli bases and bit-strings, respectively. b-$B$ denotes batch size $B$, which refers to the number of bit-strings used per Pauli basis in a given gradient computation.}
\end{figure}

\begin{figure*}
  \centering
  \includegraphics[width=\textwidth]{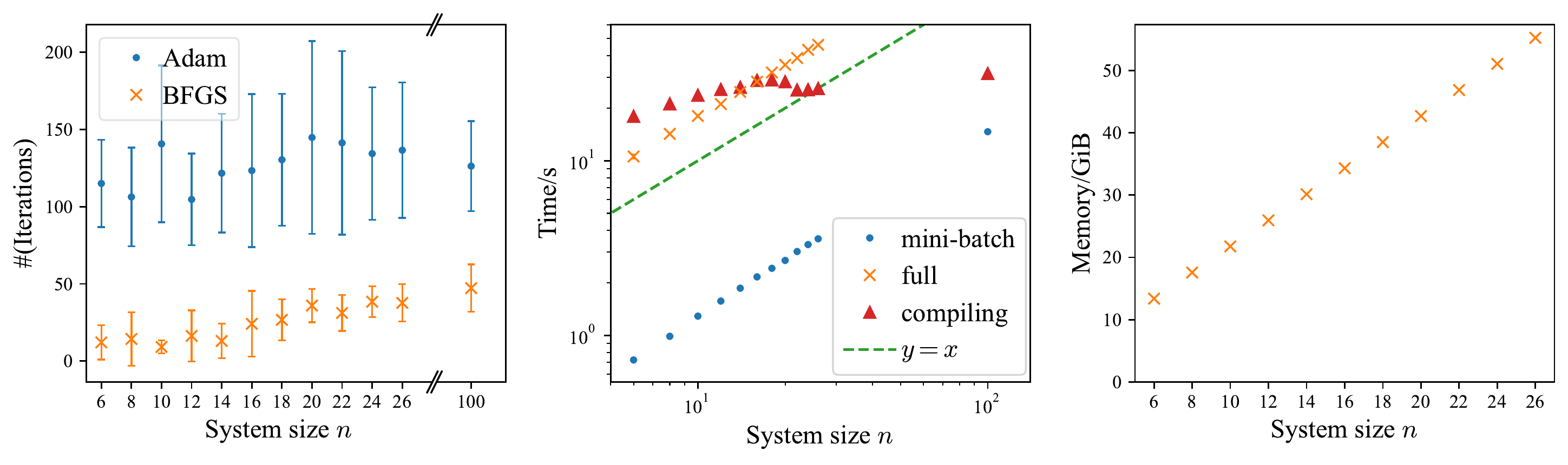}
  \caption{
    \label{fig:timing}
    Resource requirements for learning the Heisenberg Hamiltonian at varying system sizes $n$.
    \textbf{Left:} The number of iterations needed to converge.
    Note that our optimization algorithm consists of two stages: an initial phase of stochastic gradient descend using the ADAM optimizer followed by a fine-tuning phase to converge to the minimum using the BFGS optimizer.
    \textbf{Center:} Times for computing one gradient evaluation on a data set of size $d=50000$ bit-strings ($J, K, M =  5, 100, 100$).
    For the mini-batch stochastic gradient descent with ADAM only one bit-string per basis is randomly chosen.
    In this case we have an effective data-set size $d_\mathrm{eff}=500$.
    For $n=100$ sites the memory requirement for the gradient w.r.t.\ the full data set is prohibitively large.
    In practice we computed the gradient w.r.t.\ chunks of the data set and summed up the results subsequently.
    The computation time was recorded on an Intel Core i9-10900 CPU@2.80GHz processor.
    However, the program utilizes at most 4 cores simultaneously.
    The time required for JIT compiling the computation is only required once for a given system size.
    The function $y=x$ (green dashed line) has the same slope as the run times, which illustrates the linear scaling.
    \textbf{Right:} As the system size grows, so does the memory requirement of the gradient computation.
    The data shown is the requirement for computing the gradient w.r.t.\ the full data set.
    For mini-batch stochastic gradients the memory requirement is reduced by a factor 100.}
\end{figure*}

\section{Asymptotic error}
\label{sec:normality}
For this section, 
we use the abbreviated notation $\partial_l = {\partial}/{\partial \theta_l}$ and $\nabla = \nabla_\theta$.
Let us first precisely define the data set $\mathcal{D}$, its associated loss function $\loss^\mathcal{D}$, and the minimizer of the loss $\theta^{\mathcal{D}}$ function as random variables.
Suppose we are given non-empty sets of time stamps $\lbrace t_1, \ldots, t_J\rbrace$ and Pauli bases $\lbrace p_1, \ldots, p_K\rbrace \subseteq \lbrace X, Y, Z\rbrace^n$, where $t_j>0$ for all $j=1, \ldots, J$.
Moreover, we denote the family of target 
Hamiltonians---under which the data set is being generated---by $H(\theta^*)$.
Changing the target Hamiltonian, the Pauli bases, or the time stamps would obviously impact the measurement outcomes and therefore the obtained data set $\mathcal{D}$.
However, for the sake of lucidity, we 
do not explicity write out this dependency and assume that those three are fixed.

\begin{definition}[Data set as a random variable]
  Each Pauli basis $p_k$ gives rise to a positive projector-valued measure (PVM) $\lbrace \pi_{0,0,\dots, 0}^k, \ldots, \pi_{1,1,\dots, 1}^k \rbrace$.
  The PVM elements together with a time stamp $t_j$ give rise to a probability distribution (over bit-strings of length $n$)
  \begin{align*}
    \rho_{j,k}: \lbrace 0, 1\rbrace^n &\longrightarrow [0, 1] \\
    s &\longmapsto \mathrm{tr}(\pi_s^k\ket{\psi_j}\bra{\psi_j}),
  \end{align*}
  where $\ket{\psi_j} = \ee^{-\ii H(\theta^*)t_j}\ket{0}$ is the quantum state vector at time $t_j$.
  Suppose further that we select a set of positive sample numbers $\lbrace d_{j,k}\vert 1\leq j \leq J,~1\leq k\leq K\rbrace$, specifying how often we measure each Pauli basis at each time.
  Denote by $d = \sum_{j,k} d_{j,k}$ the total size of the data set.
  Now we define a realization $\mathcal{D}$ of the data-set random variable $\hat{\mathcal{D}}$ to be specified by a collection of samples (bit-strings of length $n$)
  \begin{align*}
    \mathcal{D} = \lbrace s_{i,j,k} \in \lbrace 0, 1\rbrace^n\vert 1\leq i\leq d_{j,k}, 1\leq j \leq J,~1\leq k\leq K\rbrace
  \end{align*}
  and the probability of observing that realization is
  \begin{align*}
    \mathbb{P}(\mathcal{D}) = \prod_{j,k}\prod_{i=1}^{d_{j,k}} \rho_{j,k}(s_{i,j,k}).
  \end{align*}
\end{definition}

The definition of the empirical loss $\hat{\mathcal{L}}^{(d)}$ and the minimizer $\hat{\theta}^{(d)}$ follow straightforwardly.
A realization of $\hat{\mathcal{L}}^{(d)}$ and $\hat{\theta}^{(d)}$ is given by $\mathcal{L}^\mathcal{D}$ and its minimizer $\theta^\mathcal{D}$, respectively, where $\mathcal{D}$ is a data set realization of size $d$.

\begin{definition}[Expected loss function]
  The expected negative log-likelihood in the limit $d\rightarrow\infty$ is formally given by
  \begin{align}
    \loss(\theta)
    := - \frac{1}{JK}\sum_{j,k}\sum_{s\in\lbrace 0, 1\rbrace^2}^\star \rho_{j,k}(s) \log(\rho_{j,k}(s\vert\theta)),
  \end{align}
  where $\star$ indicates that only terms satisfying $\rho_{j,k}(s)\neq 0$ appear in the sum.
  Here $\rho_{j,k}(s\vert \theta)$ denotes the probability of measuring $s$ in the basis $p_k$ under the parameter values $\theta$ at time $t_j$.
\end{definition}

Now we show that $\loss^\mathcal{D}$ is twice continuously differentiable for all data sets $\mathcal{D}$.
\begin{lemma}[Derivatives of the loss function]
  \label{lem:derivatives}
  The loss function $\loss^\mathcal{D}$ is twice differentiable in every component.
\end{lemma}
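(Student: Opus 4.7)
The plan is to exhibit $\loss^\mathcal{D}$ as a finite sum of compositions of twice continuously differentiable maps and invoke the chain rule. Writing each summand as the composition
\begin{equation*}
  \theta \;\mapsto\; H(\theta) \;\mapsto\; U_j(\theta) := \ee^{-\ii t_j H(\theta)} \;\mapsto\; p_{i,j,k}(\theta) := \bigl|\bra{\phi_{i,j,k}} U_j(\theta) \ket{0}\bigr|^2 \;\mapsto\; -\log p_{i,j,k}(\theta),
\end{equation*}
my task reduces to verifying that each arrow is $C^2$ on an appropriate open domain.

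First, $\theta \mapsto H(\theta)$ is $C^2$ by the very definition of the class $\mathcal{C}$ given just before Eq.~\eqref{eq:loss}. Second, for a fixed time $t_j$ the map $H \mapsto \ee^{-\ii t_j H}$ from $\cc^{2^n\times 2^n}$ to itself is entire, since its defining power series converges uniformly on bounded sets; hence it is smooth in the matrix entries, and the composition $\theta \mapsto U_j(\theta)$ is $C^2$. (Explicit expressions for $\partial_l U_j$ and $\partial_l\partial_m U_j$ can be obtained from the Duhamel identity by iterating once, but the mere smoothness is all that is required here.) Third, the map $U \mapsto \bra{\phi_{i,j,k}} U \ket{0}$ is complex-linear, and $z \mapsto |z|^2$ is smooth on $\cc \cong \rr^2$, so $p_{i,j,k}(\theta)$ is $C^2$ in $\theta$. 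Finally $\log$ is $C^\infty$ on $(0,\infty)$, so $\log p_{i,j,k}$ is $C^2$ on the open set $\Theta_{i,j,k} := \{\theta : p_{i,j,k}(\theta) > 0\}$. The loss is a sum of finitely many such terms, and $C^2$-regularity is preserved by finite sums, yielding the claim on the open set $\bigcap_{i,j,k}\Theta_{i,j,k}$.

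The only subtlety, and the main point to address, is the possibility that $p_{i,j,k}(\theta)$ vanishes for some $\theta$, making $\log p_{i,j,k}(\theta) = -\infty$. Since every $s_{i,j,k}$ appearing in $\mathcal{D}$ is a realized measurement outcome, it has strictly positive probability under the true parameters $\theta^*$, so by continuity each $\Theta_{i,j,k}$ is an open neighborhood of $\theta^*$, and the lemma is to be interpreted on their common open intersection (in particular on a neighborhood of $\theta^*$, which is what the subsequent asymptotic normality argument requires). No further estimate is needed: the unboundedness of $-\log$ near $0$ does not affect $C^2$-regularity on its domain of finiteness, and smoothness of the matrix exponential together with the $C^2$ assumption on $\theta \mapsto H(\theta)$ is what makes the whole chain work.
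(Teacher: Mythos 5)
Your proposal is correct and follows essentially the same route as the paper: both reduce the claim to the chain rule, the $C^2$ assumption on $\theta\mapsto H(\theta)$, and the smoothness of the matrix exponential (which the paper establishes via the Duhamel/ODE identity for its first directional derivative, while you invoke the entire power series — a slightly more economical way to get the mere $C^2$-regularity that is needed). Your explicit handling of the domain of $\log$, restricting to the open neighborhood of $\theta^*$ where all realized outcomes have positive probability, is a point the paper's proof passes over silently, and it is exactly the form of the statement needed for condition~3 of Theorem~\ref{thm:normality}.
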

\begin{proof}
  We start with the observation that the directional derivative of the matrix exponential can be expressed as
  \begin{align}
    \label{eq:matrix-exp-derivative}
    \frac{\dd}{\dd h}\ee^{A+hV} = \int_0^1 \ee^{(1-\tau)A}V\ee^{\tau A} \dd \tau.
  \end{align}
  This can be verified as follows 
  \cite{bellmann_matrix_analysis}.
  $\ee^{A+hV}$ is the solution of the matrix-valued differential equation
  \begin{align}
    \frac{\dd X}{\dd t} = (A + hV) X, \quad X(0) = \mathds{1}
  \end{align}
  at time $t=1$.
  By subtracting $AX$ and multiplying by $\ee^{-At}$ from the left we can write the differential equation as
  \begin{align}
    \frac{\dd}{\dd t}(\ee^{-At}X) = \ee^{-At}hVX,
  \end{align}
  which has the formal solution
  \begin{align}
    \ee^{-At}X(t) = \mathds{1} + \int_0^t \ee^{-A\tau}hVX(\tau)\dd \tau.
  \end{align}
  Multiplying by $\ee^{At}$ from the left and 
  expanding the series we identify the first derivative of $X$ with respect $h$
  \begin{align}
    X(t) = \ee^{At} + h\underbrace{\int_0^t \ee^{A(t-\tau)}V\ee^{A\tau} \dd\tau}_{=\frac{\dd X}{\dd h}} + \mathcal{O}(h^2),
  \end{align}
  which, evaluated at $t=1$ proves the validity of Eq.~(\ref{eq:matrix-exp-derivative}).
  Now we note that in the loss
  \begin{align}
    \loss^\mathcal{D}(\theta) = - \frac{1}{d} \sum_{i,j,k} \log \rho_{j,k}(s_i \vert \theta),
  \end{align}
  the probabilities can be written as
  \begin{align}
    \rho_{j,k}(s_i \vert \theta) &:= \bra{\phi_{i,j,k}}\ee^{-\ii H(\theta)t}\ket{0} \bra{0}\ee^{\ii H(\theta)t}\ket{\phi_{i,j,k}},
  \end{align}
  where, in line with the main text, $\ket{\phi_{i,j,k}}$ is the post measurement state vector corresponding to $s_{i,j,k}$.
  In this form, 
  it becomes apparent that if $H$ is twice continuously differentiable, the same holds for the loss function.
\end{proof}

\begin{lemma}[Zero-mean property of the score]
  \label{lem:zero-mean}
  In the limit of large $d$, where each $d_{j,k}\rightarrow\infty$, 
  the scaled derivative of the loss converges in distribution to the normal distribution
  \begin{align}
    \sqrt{d}\, \nabla_\theta \hat{\loss}^{(d)}(\theta^*) \overset{\mathrm{dist.}}{\longrightarrow} \mathcal{N}(0, \Sigma).
  \end{align}
\end{lemma}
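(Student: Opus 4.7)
The plan is to recognize the claim as a standard multivariate central limit theorem applied to the MLE score at the true parameter, adapted to the fact that the data are independent but not identically distributed across time--basis pairs $(j,k)$. Concretely, the gradient of the empirical loss at $\theta^*$ can be written as a normalized sum of independent terms,
\[
\nabla\hat{\loss}^{(d)}(\theta^*) = -\frac{1}{d}\sum_{j,k}\sum_{i=1}^{d_{j,k}} \frac{\nabla\rho_{j,k}(s_{i,j,k}\vert\theta^*)}{\rho_{j,k}(s_{i,j,k}\vert\theta^*)},
\]
so that $\sqrt{d}\,\nabla\hat{\loss}^{(d)}(\theta^*)$ is $1/\sqrt{d}$ times a sum of $d$ independent random vectors. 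Componentwise differentiability of the summands is guaranteed by Lemma~\ref{lem:derivatives}.

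The first key step is the zero-mean identity for the score. For each fixed $(j,k)$, since $s_{i,j,k}$ is drawn from $\rho_{j,k}(\cdot\vert\theta^*)$ and the sum over bit strings is finite, the standard interchange yields
\[
\mathbb{E}\!\left[\frac{\nabla\rho_{j,k}(s_{i,j,k}\vert\theta^*)}{\rho_{j,k}(s_{i,j,k}\vert\theta^*)}\right] = \sum_{s:\,\rho_{j,k}(s\vert\theta^*)>0}\nabla\rho_{j,k}(s\vert\theta^*) = \nabla\!\left(\sum_s \rho_{j,k}(s\vert\theta^*)\right) = \nabla 1 = 0.
\]
The restriction to the support of $\rho_{j,k}(\cdot\vert\theta^*)$ matches the $\star$-restriction in the definition of the expected loss and sidesteps any $0/0$ issue; finiteness of the summand on this support follows from Lemma~\ref{lem:derivatives}.

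Second, each summand has a finite covariance matrix, say $\Sigma_{j,k}$, because it is a bounded random vector (finitely many bit strings, all contributing probabilities strictly positive on the support). I would then apply the multivariate Lindeberg--Feller CLT to the triangular array of independent, zero-mean, square-integrable vectors $\{-\nabla\rho_{j,k}(s_{i,j,k}\vert\theta^*)/\rho_{j,k}(s_{i,j,k}\vert\theta^*)\}$. Assuming the sampling fractions converge, $d_{j,k}/d \to \pi_{j,k}$, the Lindeberg condition is immediate from the uniform boundedness of the scores, yielding convergence in distribution to $\mathcal{N}(0,\Sigma)$ with
\[
\Sigma = \sum_{j,k}\pi_{j,k}\,\Sigma_{j,k}.
\]

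The only real obstacle here is bookkeeping rather than analysis: the summands for distinct $(j,k)$ have different distributions and different covariances, and one must reconcile the global normalization $1/\sqrt{d}$ with the local sample sizes $d_{j,k}$. Assuming the ratios $d_{j,k}/d$ converge is the cleanest route, but a Lyapunov-condition argument works identically and avoids imposing this assumption explicitly, since the scores are uniformly bounded and therefore have all moments. Either way, once the zero-mean identity above is in hand, the statement reduces to an off-the-shelf CLT for row-independent triangular arrays.
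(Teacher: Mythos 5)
Your proposal is correct and follows essentially the same route as the paper: the zero-mean score identity via interchanging the gradient with the finite sum over bit-strings, finiteness of the covariance from the finite sample space, and a multivariate CLT. The only difference is bookkeeping: the paper applies the i.i.d.\ CLT separately within each $(j,k)$ group and then sums the resulting independent Gaussians (implicitly taking the samples evenly split, $d_{j,k}=d/(JK)$), whereas you use a single triangular-array (Lindeberg--Feller) CLT with explicit sampling fractions $d_{j,k}/d\to\pi_{j,k}$, which handles the normalization across groups slightly more carefully.
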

\begin{proof}
  Let $\hat{A}_{j,k} = \nabla_\theta \log\big(\rho_{j,k}(\hat{s}\vert \theta)\big)\big\vert_{\theta = \theta^*}$ denote the so called score, so that we can write
  \begin{align}
    \nabla_\theta \hat{\loss}^{(d)}(\theta^*) = \frac{1}{JK}\sum_{j,k} \hat{A}^{(d_{j,k})}_{j,k},
  \end{align}
  where $\hat{A}^{(N)}_{j,k}$ is the $N$-sample mean estimator of $\hat{A}_{j,k}$.
  To apply the central limit theorem to the score, we show that it has zero mean and finite variance.
  The mean is readily calculated to be
  \begin{align}
    \mathbb{E}(\hat{A}_{j,k}) &= \sum_{s\in\lbrace 0, 1\rbrace^n} \rho_{j,k}(s)\nabla_\theta\log\big(\rho_{j,k}(s\vert\theta)\big)\big\vert_{\theta = \theta^*} \\
    &= \nabla_\theta \underbrace{\sum_{s\in\lbrace 0, 1\rbrace^n} \rho_{j,k}(s)\vert\theta}_{=1}\Bigg\vert_{\theta = \theta^*} \\
    &= 0. \nonumber
  \end{align}
  The $(l,m)$-th element of the covariance matrix is given by
  \begin{align}
    &\mathrm{Cov}_{l,m}(\hat{A}_{j,k}) \\
    &= \sum_{s\in\lbrace 0, 1\rbrace^n} \rho^+_{j,k}(s) \big[\partial_l \rho_{j,k}(s\vert\theta)\big] \big[\partial_m \rho_{j,k}(s\vert\theta)\big]\bigg\vert_{\theta=\theta^*}, \nonumber
  \end{align}
  where $x^+$ denotes the pseudo inverse of $x$.
  Since $\rho_{j,k}$ is twice differentiable all summands are finite and thus the sum is finite.
  Now, by the multivariate central limit theorem $\hat{A}_{j,k}^{(d_{j,k})} \overset{\mathrm{dist.}}{\longrightarrow} \mathcal{N}\big(0, \mathrm{Cov}(\hat{A}_{j,k})\big)$ as $d_{j,k} \rightarrow \infty$.
  Therefore, the sum of all scores converges in distribution to $\mathcal{N}(0, \Sigma)$, where
  \begin{align}
    \Sigma = \frac{1}{JK}\sum_{j,k} \mathrm{Cov}(\hat{A}_{j,k}).
  \end{align}
\end{proof}

\begin{thm}[Normality of the maximum likelihood estimator \cite{newey_large_nodate}]
  \label{thm:normality}
  Suppose the following conditions hold:
  \begin{enumerate}
    \item The estimator $\hat{\theta}^{(d)}$ is consistent, i.e., $\hat{\theta}^{(d)} \overset{\mathrm{p.}}{\longrightarrow} \theta^*$.
    \item $\theta^* \in \mathrm{interior}(\Theta)$.
    \item The loss $\hat{\loss}^{(d)}$ is twice continuously differentiable in a neighborhood $N$ of $\theta^*$.
    \item The loss satisfies $\sqrt{d}\nabla_\theta \hat{\loss}^{(d)}(\theta^*) \overset{\mathrm{dist.}}{\longrightarrow} \mathcal{N}(0, \Sigma)$.
    \item There exists a function $H:\rr^p \rightarrow \rr^{p\times p}$ that is continuous at $\theta^*$ with the property
    $$\underset{\theta\in\mathcal{N}}{\mathrm{sup}} \Vert \nabla^2_\theta \hat{\loss}^{(d)}(\theta) - H(\theta) \Vert \overset{\mathrm{p.}}{\longrightarrow} 0.$$
    \item $H(\theta^*)$ is non-singular.
  \end{enumerate}
  Then the estimator $\hat{\theta}^{(d)}$ is asymptotically normal, i.e.,
  $$\sqrt{d} (\hat{\theta}^{(d)} - \theta^*) \overset{\mathrm{dist.}}{\longrightarrow} \mathcal{N}\big(0, \tilde{\Sigma}\big),$$
  where $\tilde{\Sigma} = H^{-1}(\theta^*)\Sigma H^{-1}(\theta^*)$.
\end{thm}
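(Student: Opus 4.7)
The plan is to follow the standard route for asymptotic normality of M-estimators: derive a linearization of the first-order optimality condition around $\theta^*$, then combine a central-limit statement for the score with a law-of-large-numbers statement for the Hessian via Slutsky's theorem. Concretely, since by condition~2 the true parameter lies in the interior of $\Theta$, consistency (condition~1) ensures that with probability tending to one $\hat{\theta}^{(d)}$ also lies in the neighborhood $N$ of $\theta^*$ in which $\hat{\loss}^{(d)}$ is twice continuously differentiable (condition~3). On this event $\hat{\theta}^{(d)}$ is an interior minimizer, so the first-order condition $\nabla \hat{\loss}^{(d)}(\hat{\theta}^{(d)}) = 0$ holds.

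Next I would Taylor-expand this optimality condition around $\theta^*$ using the multivariate mean-value theorem applied componentwise: for each coordinate $l$ there exists an intermediate point $\bar{\theta}^{(l)}$ on the segment between $\hat{\theta}^{(d)}$ and $\theta^*$ with
\begin{equation*}
0 = \partial_l \hat{\loss}^{(d)}(\theta^*) + \bigl[\nabla^2 \hat{\loss}^{(d)}(\bar{\theta}^{(l)})\bigr]_{l,\cdot} (\hat{\theta}^{(d)} - \theta^*).
\end{equation*}
Stacking these $p$ equations yields a random matrix $\bar{H}^{(d)}$ whose $l$-th row is the $l$-th row of $\nabla^2 \hat{\loss}^{(d)}(\bar{\theta}^{(l)})$, and the identity
\begin{equation*}
\sqrt{d}\,(\hat{\theta}^{(d)} - \theta^*) = -\bigl[\bar{H}^{(d)}\bigr]^{-1}\, \sqrt{d}\, \nabla \hat{\loss}^{(d)}(\theta^*),
\end{equation*}
valid once $\bar{H}^{(d)}$ is invertible.

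The core step is then to show $\bar{H}^{(d)} \overset{\mathrm{p.}}{\longrightarrow} H(\theta^*)$. For each $l$, the intermediate point satisfies $\bar{\theta}^{(l)} \overset{\mathrm{p.}}{\longrightarrow} \theta^*$ by consistency, so one writes
\begin{equation*}
\Vert \bar{H}^{(d)}_{l,\cdot} - H(\theta^*)_{l,\cdot}\Vert \le \sup_{\theta \in N}\Vert \nabla^2 \hat{\loss}^{(d)}(\theta) - H(\theta)\Vert + \Vert H(\bar{\theta}^{(l)}) - H(\theta^*)\Vert,
\end{equation*}
the first term vanishing in probability by condition~5 and the second by continuity of $H$ at $\theta^*$ (condition~5) combined with the continuous mapping theorem. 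Non-singularity of $H(\theta^*)$ (condition~6) then yields $[\bar{H}^{(d)}]^{-1} \overset{\mathrm{p.}}{\longrightarrow} H(\theta^*)^{-1}$ by another application of the continuous mapping theorem to the matrix inversion map.

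Finally I would invoke condition~4 to get $\sqrt{d}\,\nabla \hat{\loss}^{(d)}(\theta^*) \overset{\mathrm{dist.}}{\longrightarrow} \mathcal{N}(0,\Sigma)$, and combine with the probability-limit of $[\bar{H}^{(d)}]^{-1}$ via Slutsky's theorem; the resulting distribution of $-H(\theta^*)^{-1}\mathcal{N}(0,\Sigma)$ is $\mathcal{N}(0, H^{-1}(\theta^*)\Sigma H^{-1}(\theta^*))$ because the sign affects only the (zero) mean and $H^{-1}(\theta^*)$ is symmetric at the true parameter (as a limit of Hessians). The main subtlety I expect is the mean-value-theorem step: in the multivariate setting the intermediate points genuinely differ across coordinates, and one must be careful that the uniform-convergence hypothesis in condition~5 is exactly what allows us to collapse all of them to the single limit $H(\theta^*)$. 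Everything else is a bookkeeping application of Slutsky and the continuous mapping theorem.
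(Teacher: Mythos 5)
Your argument is correct and is essentially the standard proof of this theorem: the paper itself does not reprove it but defers to the cited reference (Newey and McFadden), whose Theorem~3.1 proceeds exactly as you do --- first-order condition on the high-probability event that $\hat{\theta}^{(d)}$ lies in $N$, coordinatewise mean-value expansion, uniform convergence of the Hessian to collapse the distinct intermediate points to $H(\theta^*)$, and Slutsky. Your closing remarks on the symmetry of $H(\theta^*)$ and on why condition~5 (rather than pointwise convergence) is needed are precisely the right subtleties to flag.
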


Here $\,\overset{\mathrm{p.}}{\longrightarrow}\,$ and $\,\overset{\mathrm{dist.}}{\longrightarrow}\,$ denote convergence in probability and distribution, respectively.
A proof can be found in Ref.~\cite{newey_large_nodate}.
We are now equipped to provide a rigorous statement and proof of Theorem \ref{thm:error-informal} in the main text.

\begin{thm}[Asymptotic error]
  Let $\hat{\mathcal{L}}^{(d)}$ and $\hat{\theta}^{(d)}$ be defined as above.
  Let $\Theta \subset \rr^\nu$ be compact and suppose the true parameters $\theta^*$ are contained in the interior of $\Theta$.
  Suppose the class
  $$\mathcal{C} = \lbrace H(\theta) \vert \theta \in \Theta \rbrace$$
  of parametrized Hamiltonians is well conditioned, such that the estimator $\hat{\theta}^{(d)}$ is consistent and the Hessian at the true parameter value is non-singular.
  Furthermore, let the parametrization $\theta \mapsto H(\theta)$ be twice continuously differentiable.
  Then for any $\delta \in (0,1]$ there exists a function $f(d) = \mathcal{O}(d^{-1/2})$ and some value $D\in\mathbb{N}$ such that $\pp\big[\hat{\epsilon}^{(d)} > f(d)\big] < \delta$ for all $d \geq D$.
\end{thm}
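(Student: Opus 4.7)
The plan is to apply the normality theorem for the maximum likelihood estimator (Theorem~\ref{thm:normality}) and then convert the resulting distributional convergence into a high-probability bound on the relative error.

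First, I would verify the six hypotheses of Theorem~\ref{thm:normality} one by one. Consistency (condition 1) and $\theta^* \in \mathrm{interior}(\Theta)$ (condition 2) are part of the assumptions. Twice continuous differentiability of $\hat{\loss}^{(d)}$ on all of $\Theta$ (condition 3) is exactly \cref{lem:derivatives}, whose proof only requires $\theta \mapsto H(\theta)$ to be twice continuously differentiable---which is assumed. Condition 4 is \cref{lem:zero-mean}. Condition 6 (non-singular Hessian at $\theta^*$) is assumed. The only condition requiring a little extra argument is condition 5: I would set $H(\theta) := \mathbb{E}[\nabla^2 \hat{\loss}^{(d)}(\theta)]$, which is continuous at $\theta^*$ since the probabilities $\rho_{j,k}(s\vert \theta)$ and their first two derivatives depend continuously on $\theta$ (via the integral representation~(\ref{eq:matrix-exp-derivative}) and twice-differentiability of $H(\theta)$), and then obtain the uniform-in-probability convergence on a compact neighborhood of $\theta^*$ from a uniform law of large numbers---each entry of $\nabla^2 \hat{\loss}^{(d)}$ is an average of i.i.d.\ bounded random variables whose parameter-dependence is continuous on a compact set.

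Once all hypotheses are in place, Theorem~\ref{thm:normality} yields
\begin{equation}
\sqrt{d}\bigl(\hat{\theta}^{(d)} - \theta^*\bigr) \overset{\mathrm{dist.}}{\longrightarrow} Z, \qquad Z \sim \mathcal{N}\!\bigl(0, \tilde{\Sigma}\bigr).
\end{equation}
Applying the continuous mapping theorem to $x \mapsto \|x\|_2$ gives $\sqrt{d}\,\|\hat{\theta}^{(d)} - \theta^*\|_2 \overset{\mathrm{dist.}}{\longrightarrow} \|Z\|_2$. Since $\|Z\|_2$ is almost surely finite, for any $\delta \in (0,1]$ we can pick $R > 0$ so that $\mathbb{P}[\|Z\|_2 > R] < \delta/2$. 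Because $\|Z\|_2$ has a continuous distribution function, the Portmanteau theorem then yields some $D \in \mathbb{N}$ such that for all $d \geq D$,
\begin{equation}
\mathbb{P}\bigl[\sqrt{d}\,\|\hat{\theta}^{(d)} - \theta^*\|_2 > R\bigr] < \delta.
\end{equation}

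Finally, dividing through by $\sqrt{d}\,\|\theta^*\|_2$ and setting $f(d) := R/(\sqrt{d}\,\|\theta^*\|_2)$, which is manifestly $\mathcal{O}(d^{-1/2})$, gives $\mathbb{P}[\hat{\epsilon}^{(d)} > f(d)] < \delta$ for all $d \geq D$, as claimed. The main technical obstacle is condition 5 of the normality theorem; the other steps---derivatives of the loss, zero mean of the score, and the final Portmanteau argument---are already essentially in place from the preceding lemmas or are standard consequences of asymptotic normality.
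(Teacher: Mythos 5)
Your proposal is correct and follows essentially the same route as the paper's proof: verify the hypotheses of the asymptotic-normality theorem using the two lemmas and the stated assumptions, then convert the convergence in distribution of $\sqrt{d}(\hat{\theta}^{(d)}-\theta^*)$ into a tail bound at radius $R$ chosen so that $\mathbb{P}[\Vert Z\Vert_2 > R] < \delta/2$, yielding $f(d) = R/(\sqrt{d}\,\Vert\theta^*\Vert_2)$. The only notable difference is cosmetic: you handle condition 5 via a uniform law of large numbers for $H(\theta)=\mathbb{E}[\nabla^2\hat{\loss}^{(d)}(\theta)]$ on a compact neighborhood, which is if anything slightly more careful than the paper's pointwise argument with $H=\nabla^2\loss$, and you phrase the final step via the continuous mapping and Portmanteau theorems where the paper uses an explicit ball-rescaling identity.
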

\begin{proof}
  First we verify the conditions in Theorem \ref{thm:normality} are satisfied.
  By assumption condition \emph{1}, \emph{2} and \emph{6} are satisfied.
  By Lemma \ref{lem:derivatives} and \ref{lem:zero-mean} conditions \emph{3} and \emph{4} are satisfied, respectively.
  Furthermore, both $\loss$ and $\hat{\loss}^{(d)}$ are twice continuously differentiable and by the law of large numbers $\hat{\loss}^{(d)}(\theta) \overset{\mathrm{p.}}{\longrightarrow} \loss(\theta)$ for all $\theta \in \Theta$.
  Hence, we also have $\nabla^2_\theta\hat{\loss}^{(d)}(\theta)  \overset{\mathrm{p.}}{\longrightarrow} \nabla^2\loss(\theta)$, element-wise in the interior of $\Theta$, which suffices to show that condition \emph{5} is satisfied, where $\nabla^2\loss$ is the required function $H$.

  Now denote a Euclidean ball of radius $r$ centered at $0$ in $\rr^\nu$ by $B_r$ and its complement by $\bar{B}_r = \rr^\nu \setminus B_r$.
  Furthermore, let $Z$ be a random variable with distribution $\mathcal{N}(0,\tilde{\Sigma})$ and denote its probability density function by $p$.
  Now define
  \begin{align}
    r(\delta) := \min \lbrace r\vert \mathbb{P}(Z \in \bar{B}_r) \leq \delta \rbrace.
  \end{align}
  Since $p$ is normalized we have $r(\delta) < \infty$ for all $\delta \in (0, 1]$.
  Now note that
  \begin{align}
      \mathbb{P}\big[\sqrt{d}\,Z \in \bar{B}_{r(\delta)}\big]
      = \mathbb{P}\big[Z \in \bar{B}_{r(\delta)/ \sqrt{d}}\big],
  \end{align}
  which can be verified by writing out the integral over $p$ and scaling the argument of $p$ by a factor of $1/\sqrt{d}$.
  Together with this identity and the fact that $\sqrt{d}(\hat{\theta}^{(d)} - \theta^*)$ converges in distribution to $Z$, we can conclude that
  \begin{align}
      \underset{d\rightarrow \infty}{\lim} a_d = \delta,
  \end{align}
  where $a_d(\delta) := \mathbb{P}\big[\hat{\theta}^{(d)}- \theta^* \in \bar{B}_{r(\delta)/ \sqrt{d}}\big]$.
  Now consider the sequence $a_d(\delta/2)$.
  Since it converges to $\delta / 2$ there exists a value $D\in\mathbb{N}$ such that $a_d(\delta / 2) < \delta$ for all $d \geq D$.
  Lastly, the condition $\hat{\theta}^{(d)} - \theta^* \in \bar{B}_{r(\delta/2)/\sqrt{d}}$ translates to 
  \begin{align}
      \hat{\epsilon}^{(d)} < \frac{r(\delta/2)}{\sqrt{d}\,\Vert \theta^* \Vert_2},
  \end{align}
  which gives us the function $f$ described in the theorem.
\end{proof}

Note that showing that the class of parametrized 
Hamiltonians $\mathcal{C}$ is well conditioned, in the sense of the theorem, is not trivial.
Additional restrictions are necessary to exclude pathological cases in which the times $\lbrace t_j\rbrace$ are commensurate with the oscillations in the state amplitudes during time evolution.
Also one must ensure that the initial state has non-zero overlap with sufficiently many eigenstates of the Hamiltonian to be able to observe meaningful dynamics in the system.
In practice one is unlikely to encounter these issues, which is also exemplified by the fact that one can, in principle, recover a generic geometrically local and traceless Hamiltonian from only one time stamp using a generic initial state \cite{li_hamiltonian_2020}.

\section{Loss function}
\label{sec:landscape}
The analysis of our method is based on the relative error $\epsilon$.
However, in a practical setting, where one doesn't know the Hamiltonian, the error is not accessible.
Instead one needs to gauge convergence and success of the optimization by the loss function $\mathcal{L}$.
To demonstrate that one can indeed distinguish outliers, as shown in Fig.~\ref{fig:meta-params-panel} from successfully converged parameters the error is plotted against the loss function value in Fig.~\ref{fig:loss-vs-error}.

\begin{figure}
  \centering
  \includegraphics[width=.48\textwidth]{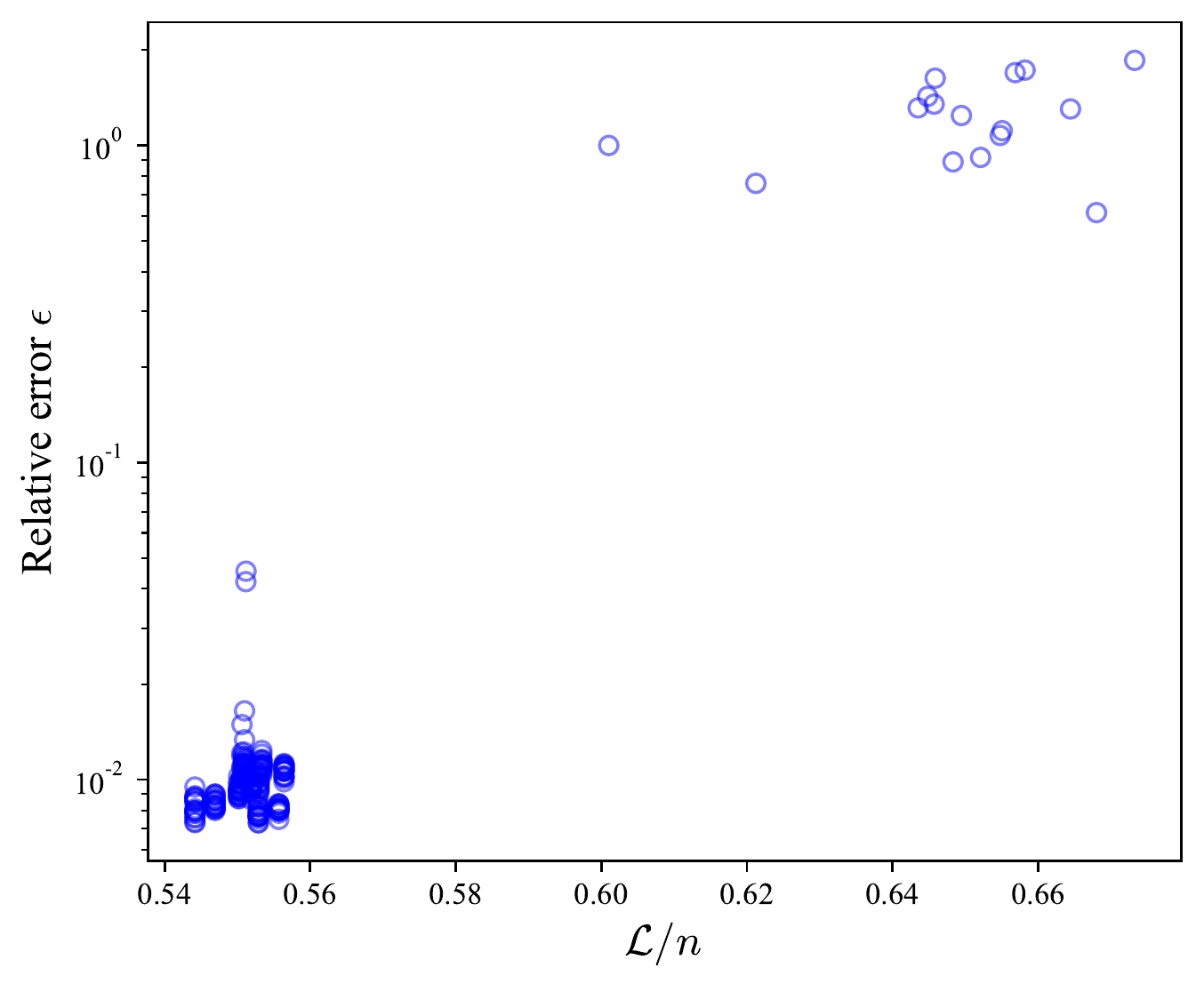}
  \caption{
    The same data points as in Fig.~\ref{fig:meta-params-panel}(b) are shown.
    Here the relative error $\epsilon$ (for various system sizes) is plotted against the loss function value $\loss$, normalized by the system size $n$.
    One can clearly divide outliers from successfully converged parameter values purely based on the loss function value.
  }
  \label{fig:loss-vs-error}
\end{figure}

To investigate the impact various experimental and meta parameters have on the loss function landscape and the ability to recover the true Hamiltonian, we plotted the landscape in a model with $\nu=2$ parameters.
In this example instead of sampling Pauli bases at random, only the bases $X^{\otimes n}$, 
$Y^{\otimes n}$, and 
$Z^{\otimes n}$ have been 
considered.
The results are shown in Fig.~\ref{fig:meta-params-panel}.
In \textbf{(b)} the number of bit-strings has been reduced. In \textbf{(c)}, only bit-strings from all-$Z$ measurements have been used.
In \textbf{(d)}, 
only samples from the first two time stamps have been made use of.
In \textbf{(e)}, 
the Trotter step size has been increased leading to higher errors.
In \textbf{(f)}, 
the bond dimension has been reduced, again increasing the 
errors.
It becomes apparent, that only measuring in the $Z$ basis has negative effects on the loss landscape.
Also one can clearly see that too short times are not sufficient for the loss function to form a well pronounced minimum.

\begin{figure}
  \centering
  \includegraphics[width=.48\textwidth]{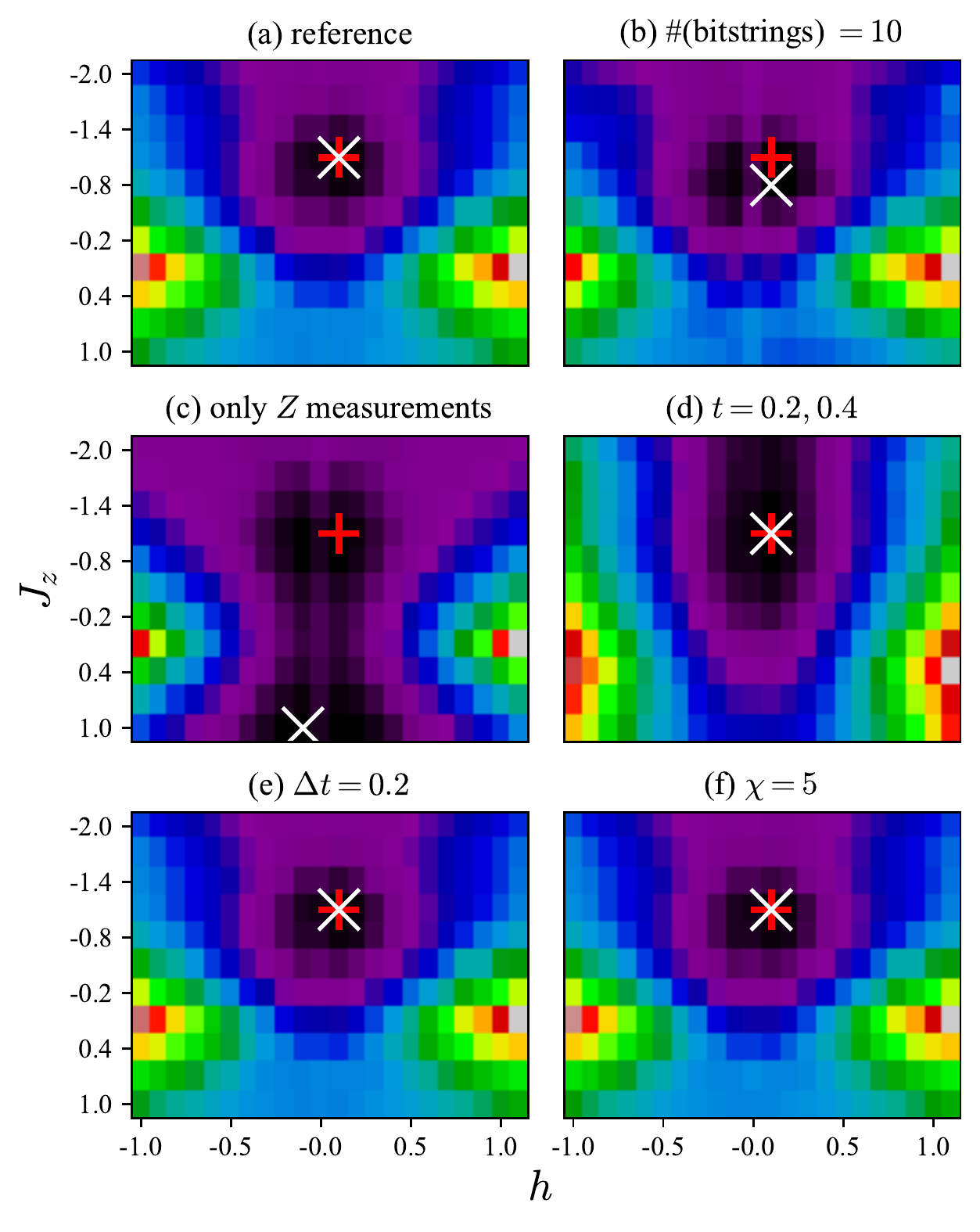}
  \caption{The loss function in the parameters $h$ (uniform over all spins) and $J_z$ in arbitrary units (black is low).
  To compute the loss 1000 bit-strings have been sampled for each basis choice, all-$X$, all-$Y$, all-$Z$, and for each time stamp $t = 0.2, 0.4, 0.6, 0.8, 1.0$.
  The Trotter step size is $\Delta t = 0.05$ and the bond dimension is $\chi = 10$.
  The red cross shows the true parameters, while the white cross shows the minimum of the loss function.
  For reference the loss landscape is shown in \textbf{(a)} with these exact meta parameters.}
  \label{fig:meta-params-panel}
\end{figure}

\section{Differentiation of the singular value decomposition}
\label{sec:svd-ad}

The \emph{singular value decomposition} 
(SVD) is an integral part of many tensor network 
algorithms \cite{schollwock_density-matrix_2011}.
Apart from linear algebra applications such as computing the rank or the kernel of a matrix, the SVD is widely used in physics, for instance to access the Schmidt spectrum, the entanglement entropy, and reduced density matrices of bipartite quantum states.

The Jacobian-vector product can be used for forward-mode automatic differentiation, but allows reverse-mode automatic differentiation as well by using transposition rules, thereby avoiding the need to work with the dual transformation (vector-Jacobian product) explicitly.
As a design choice, the popular open-source automatic differentiation framework JAX \cite{jax2018github} contains only Jacobian-vector product rules for non-linear transformations (such as the SVD) along with transposition rules for linear transformations.
The results derived here have been included into JAX ({\small\href{https://github.com/google/jax/pull/5225}{\texttt{github.com/google/jax/pull/5225}}}). \smallskip

\textbf{Previous work.} We emphasize that the vector-Jacobian product of the complex SVD has been worked out previously \cite{wan_automatic_2019,liu_linear_2019}.
These results provide the basis for our derivation of the JVP rule.
Our aim is to make the differentiation of the SVD accessible to a broad audience, not necessarily familiar with the field of automatic differentiation.
Additionally, 
we provide the Jacobian-vector product and discuss its lack of uniqueness.
The reader may also be referred to a larger collection of automatic differentiation formulas for linear algebra transformations \cite{giles_extended_nodate}. \smallskip

\textbf{Notation.} Let $\I$ be the identity matrix and $\bar{\I}$ its complement, i.e., where the diagonal is zero and all off-diagonal entries are one.
If necessary we will annotate the matrix dimensions, e.g., $\I_{n\times m}$.
Let $\circ$ denote the element-wise matrix product, say, $(A\circ B)_{i,j} = A_{i,j} B_{i,j}$. \smallskip

\textbf{Automatic differentiation.} Imagine we have a function $L:\mathbb{R}\rightarrow\mathbb{R}$ which is composed of several functions $L(p) = f(\mathsf{S}(g(p)))$, where $\mathsf{S}$ may denote the SVD.
The derivative of $L$ is then given by the chain rule $\mathrm{D}L = \mathrm{D}f\cdot\mathrm{D}\mathsf{S}\cdot\mathrm{D}g$.
To compute the derivative, we can multiply these three Jacobians beginning from the left or the right, which is called backward-mode or forward-mode differentiation, respectively.
In the former case one needs the vector-Jacobian product, in the latter the Jacobian-vector product.
This technique can be used on more complex functions $L$ as well.
Instead of elaborating on more general cases we refer the reader to Ref.~\cite{bartholomew-biggs_automatic_2000}. \smallskip 

\textbf{Derivation.} We begin by stating the relation between the input $A \in \C^{n \times m}$ of the SVD and its output $U \in \C^{n \times k}, S = \diag(s_1, \ldots, s_k) \in \C^{k \times k}, \dg{V} \in \C^{k \times m}$, where $s_i\in\mathbb{R}$ for all $i$ and $k = \mathrm{min}(n, m)$.
We have
\begin{equation}
  A = U S \dg{V}.
\end{equation}
The total derivative of this relation \cite{total_derivative_footnote}
\begin{equation}
  \dd A = \dd U S \dg{V} + U \dd S \dg{V} + U S \dd\dg{V}
\end{equation}
allows us to derive the differentiation rules.
As we will see, this derivative is not well defined as it stands, since there is a gauge freedom in the matrices $U$ and $\dg{V}$.
Our aim is to find expressions for $\dd U$, $\dd S$, and $\dd\dg{V}$ in terms of $\dd A$. \smallskip

\textbf{Square matrices.} We assume $m = n$ for the sake of lucidity and explain additional steps necessary for non-square matrices later.
First we transform $\dd A$ by multiplying $\dg{U}$ and $V$ from the right and left, respectively.
We obtain
\begin{equation}
  \dd \tilde{A} = \dd \tilde{U} S + \dd S + S \dd\dg{\tilde{V}},
\end{equation}
where $\dd \tilde{U} = \dg{U}\dd U$ and $\dd \tilde{V} = \dg{V} \dd{V}$.
Note that
\begin{equation}
  \label{eq:skew-symmetry}
  \dd \tilde{U} = -\dd \dg{\tilde{U}} \quad \dd \tilde{V} = -\dd \dg{\tilde{V}}
\end{equation}
due to $U$ and $V$ being unitary, i.e., $\dd (\dg{U}U) = \dd \I = 0$.
This implies that the diagonal of $\dd \tilde{U}$ and $\dd \tilde{V}$ are purely imaginary and we can eliminate them by adding $\dd \tilde{A}$ to its Hermitian conjugate.
With this we can already state the derivative of the singular values
\begin{equation}
  \label{eq:dS}
  \dd S = \frac{1}{2}(\dd \tilde{A} + \dd \dg{\tilde{A}}).
\end{equation}
To compute $\dd \tilde{U}$ and $\dd \tilde{V}$ we first focus on the off-diagonal portion.
This is closely related to the derivation for SVD of real matrices \cite{townsend_differentiating_2016}.
In contrast, the diagonals need extra care, which has no counterpart in the real case.
Let $i\neq j$ such that we can disregard $\dd S$ and consider the matrix 
element
\begin{align}
  (\dd \tilde{A}S + S \dd\dg{\tilde{A}})_{i,j} =~& \dd \tilde{U}_{i,j} S_{j,j}^2 + S_{i,i}\dd \dg{\tilde{V}}_{i,j}S_{j,j} \nonumber\\
  &+ S_{i,i}^2 \dd \dg{\tilde{U}}_{i,j}  + S_{i,i}\dd \tilde{V}_{i,j}S_{j,j},
\end{align}
using the Einstein summation convention.
Using (\ref{eq:skew-symmetry}), 
we see that the two terms involving $\dd\tilde{V}$ vanish while the other two can be combined.
We can solve the equation element-wise for $\dd \tilde{U}$ using the matrix with entries $F_{i,j} = 1/(s_j^2 - s_i^2)$ for $i\neq j$ and zero otherwise.
We can use the same matrix to solve for $\dd \dg{\tilde{V}}$ and obtain
\begin{align}
  \label{eq:dtildeU}
  \bar{\I} \circ \dd\tilde{U} &= F \circ \frac{1}{2}(\dd \tilde{A}S + S \dd\dg{\tilde{A}}) ,\\
  \label{eq:dtildeV}
  \bar{\I} \circ \dd\tilde{V} &= F \circ \frac{1}{2}(S\dd\tilde{A} + \dd\dg{\tilde{A}}S).
\end{align}
What is left is to determine the diagonals.
We do this by considering the diagonal of $\dd \tilde{A} - \dd \dg{\tilde{A}}$.
Since this eliminates the (purely real) components of $\dd S$ we are left with
\begin{equation}
  \label{eq:dUdV_diag}
  S^{-1} \circ \frac{1}{2}(\dd \tilde{A} - \dd \dg{\tilde{A}}) = \I \circ (\dd\tilde{U} + \dd\dg{\tilde{V}}).
\end{equation}
This tells us the sum of $\dd\tilde{U}$'s and $\dd\dg{\tilde{V}}$'s diagonals, but not the individual summands.
We will now show that we can distribute this sum in any arbitrary way (for instance half-half) between the two matrices.
To see this consider an arbitrary matrix element of $\dd A$
\begin{align}
  \label{eq:dA_il_1}
  \dd A_{i,l} 
  =&~ U_{i,j}\dd\tilde{U}_{j,k}S_{k,k}V^*_{l,k} \\
  \label{eq:dA_il_2}
  &+ U_{i,j}S_{j,j}\dd \tilde{V}^*_{k,j} V^*_{l,k} \\
  &+ U_{i,j}\dd S_{j,j}V^*_{l,j}. \nonumber
\end{align}
We are interested in all summands in (\ref{eq:dA_il_1}) and (\ref{eq:dA_il_2}) that contain diagonal elements of $\dd\tilde{U}$ and $\dd\dg{\tilde{V}}$, i.e., 
summands where $j = k$.
This reduced sum turns out to be $U_{i,j}S_{j,j}V^*_{l,j}(\dd\tilde{U}_{j,j} + \dd\tilde{V}^*_{j,j})$
% \begin{equation}
%   U_{ij}\dd \tilde{U}_{jj}S_{jj}V^*_{lj} + U_{ij}S_{jj}\dd\tilde{V}^*_{jj}V^*_{lj}
%   = U_{ij}S_{jj}V^*_{lj}(\dd\tilde{U}_{jj} + \dd\tilde{V}^*_{jj}),
% \end{equation}
which implies the above claim, that any solution of equation (\ref{eq:dUdV_diag}) suffices to recover $\dd A$.
Hence one valid Jacobian-vector product rule is
\begin{align}
  \label{eq:dU}
  \dd U &= \frac{U}{2} \big[F \circ (\dd \tilde{A}S + S \dd\dg{\tilde{A}}) + S^{-1} \circ (\dd \tilde{A} - \dd \dg{\tilde{A}}) \big],\\
  \label{eq:dV}
  \dd V &= \frac{V}{2} \big[ F \circ (S\dd\tilde{A} + \dd\dg{\tilde{A}}S) \big].
\end{align} \\

\textbf{Gauge freedom.}
In addition to the freedom of choice of the diagonal of $\dd \tilde{U}$ and $\dd\tilde{V}$ there is a freedom in the choice of $U$ and $V$ in the SVD.
Notice that the SVD is invariant under the transformation $U \mapsto U\Lambda$ and $\dg{V} \mapsto \dg{\Lambda}\dg{V}$, with $\Lambda = \mathrm{diag}(\ee^{\ii\phi_1}, \ldots, \ee^{\ii\phi_k})$ for arbitrary $\phi_i$'s as $\dg{\Lambda}S\Lambda = S$.
Consequently the derivatives transform as $\dd U \mapsto \dd U\Lambda$ and $\dd\dg{V} \mapsto \dg{\Lambda}\dd\dg{V}$.
Therefore, we can only consider functions $L$ (discussed in the beginning) which, at some point during the calculation, eliminate the gauge freedom, perhaps by multiplying $U$ with $f(S)$ and then $\dg{V}$ for some differentiable function $f$. \smallskip

\textbf{Non-square matrices.}
To solve the case where $n\neq m$ we assume that $n>m$, such that $U\in \mathbb{C}^{n\times m}$, $S\in\mathbb{R}^{m\times m}$, and $V \in \mathbb{C}^{m\times m}$.
It will be straightforward to repeat the following arguments for $n<m$.
Following the reasoning presented in Ref.~\cite{townsend_differentiating_2016} we first notice that $U$ is no longer an element of the unitary group. In fact, it is an element of the \emph{Stiefel manifold} $V_m(\mathbb{C}^{n})$.
While it is still true that $\dg{U}U = \I_{m\times m}$, we now have $U\dg{U} \neq \I_{n\times n}$.
It follows that $\dd \tilde{U} \in \mathbb{C}^{m\times m}$ is still well defined and anti-Hermitian $\dd \tilde{U} = -\dd\dg{\tilde{U}}$.
Since these are the properties we have used to derive the expressions for $\dd S$, $\dd\tilde{U}$ and $\dd\tilde{V}$ (Eqs.~(\ref{eq:dS}), (\ref{eq:dtildeU}), and (\ref{eq:dtildeV})) they remain valid in the non-square case.

However, extra care is required when recovering $\dd U$ since we cannot simply multiply $\dd\tilde{U}$ by $U$ from the left.
Instead we need to extend $U \in \mathbb{C}^{n\times m}$ by $U_\perp \in \mathbb{C}^{n\times(n-m)}$ such that the stacked matrix $[U, U_\perp] \in \mathbb{C}^{n\times n}$ is unitary.
Here the columns of $U_\perp$ consist of orthonormal vectors in the complement of the image of $U$.
This allows us to recover the $n\times n$ identity
\begin{equation}
  \label{eq:nbyn_identity}
  [U, U_\perp] \cdot
  \begin{bmatrix}\dg{U}\\\dg{U}_\perp\end{bmatrix}
  = U\dg{U} + U_\perp\dg{U}_\perp
  = \I_{n\times n}.
\end{equation}
Analogously, 
we extend $\dd\tilde{U}$ by an element $\dd\tilde{U}_\perp \in \mathbb{C}^{(n-m) \times m}$ which acts in the image of $U_\perp$ such that
\begin{equation}
  \label{eq:dU_recovery}
  \dd U = U\dd\tilde{U} + U_\perp\dd\tilde{U}_\perp.
\end{equation}
This allows us to compute $\dd\tilde{U}_\perp$ by multiplying $\dd A$ by $\dg{U}_\perp$ which yields $\dd\tilde{U}_\perp = \dg{U}_\perp \dd A V S^{-1}$.
Plugging this into (\ref{eq:dU_recovery}) we have the recovery in terms of known matrices
\begin{equation}
  \label{eq:dU_recovery2}
  \dd U = U\dd\tilde{U} + (\I - U\dg{U}) \dd A V S^{-1},
\end{equation}
where we have used (\ref{eq:nbyn_identity}) to substitute $U_\perp \dg{U}_\perp$ and as before we may choose
\begin{equation}
  \dd\tilde{U} = \frac{1}{2} \big[F \circ (\dd \tilde{A}S + S \dd\dg{\tilde{A}}) + S^{-1} \circ (\dd \tilde{A} - \dd \dg{\tilde{A}}) \big].
\end{equation}
Analogously we can derive the Jacobian-vector product rule for $n<m$ and obtain
\begin{equation}
  \label{eq:dV_recovery}
  \dd\dg{V} = \dd\dg{\tilde{V}}\dg{V} + S^{-1}\dg{U}\dd A (\I - V\dg{V}).
\end{equation}
\smallskip

\textbf{Choice of the diagonal component.}
As shown in (\ref{eq:dUdV_diag}) and subsequent arguments one has the freedom to split the diagonal given in (\ref{eq:dUdV_diag}) and add its parts to $\dd \tilde{U}$ and $\dd\tilde{\dg{V}}$ given in (\ref{eq:dtildeU}) and (\ref{eq:dtildeV}).
In order to demonstrate the effect of this splitting on the precision of the Jacobian-vector product rule we differentiated the function
\begin{align}
  \label{eq:test_function}
  f(p) &= \mathrm{Re}(U(p)\cdot S(p)\cdot \dg{V}(p)), \\
  U(p), S(p), \dg{V}(p) &= \mathsf{SVD}(\exp(A + pB)) \nonumber
\end{align}
for 100 pairs of random $100\times100$ matrices $A$ and $B$, where the exponential function is applied \emph{element-wise}.
A splitting parameter $\alpha$ is multiplied to the diagonal (Eq.~\ref{eq:dUdV_diag}) which is then added to $\dd \tilde{U}$ (Eq.~\ref{eq:dtildeU}) while a $(1-\alpha)$ multiple of the diagonal is added to $\dd\dg{\tilde{V}}$ 
(Eq.~\ref{eq:dtildeV}).
The differential of the function (\ref{eq:test_function}) is evaluated at $p=20$ and compared to the analytical derivative $\dd f/\dd p = \exp(A + pB)\circ B$.
\begin{figure}
  \includegraphics[width=.53\textwidth]{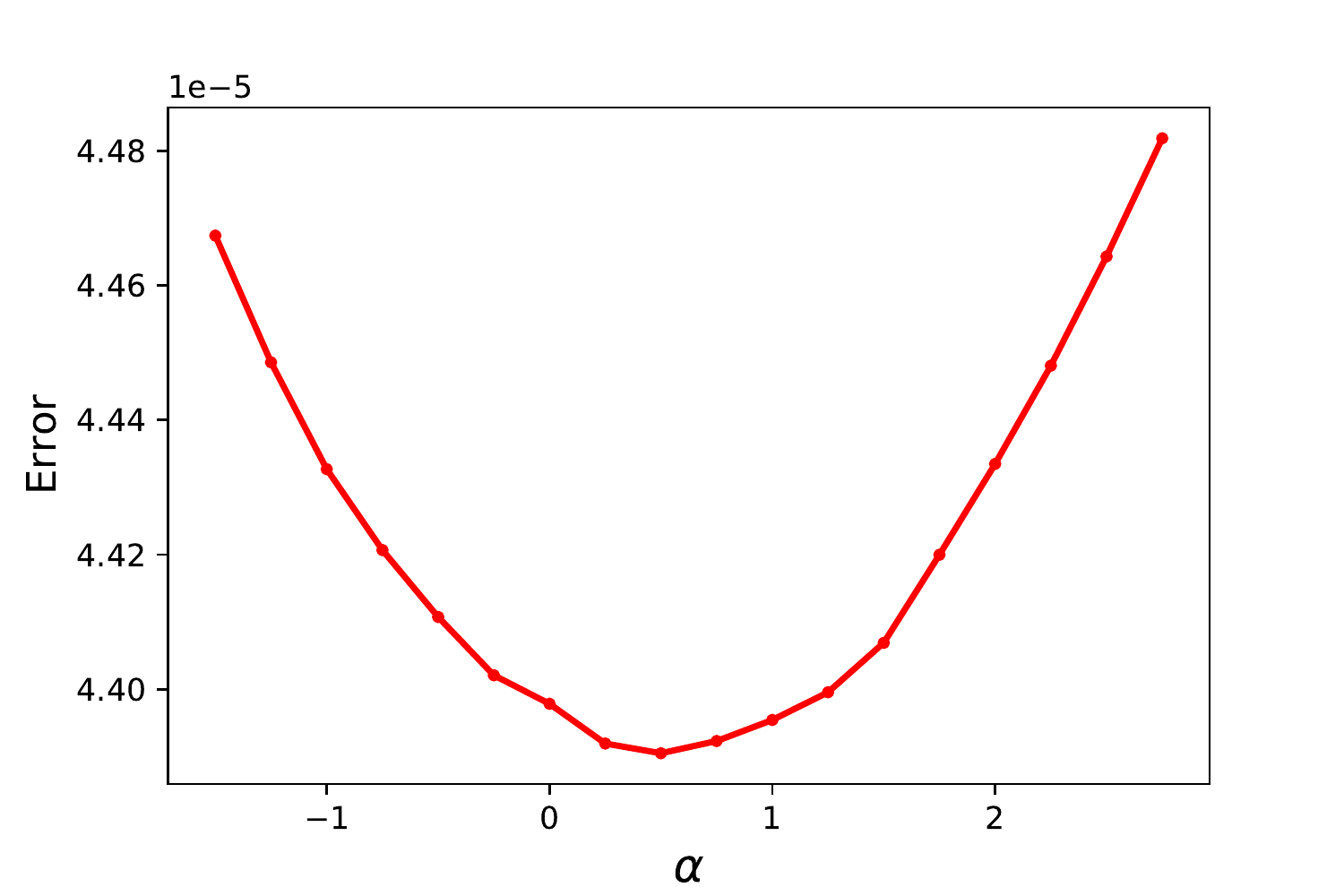}
  \caption{\label{fig:test}The error for 100 pairs of random matrices $A$ and $B$ for various values of $\alpha$. The error is defined by the Hilbert-Schmidt distance between the analytical derivative and the automatic-differentiation result $\Vert \dd f/\dd p - \mathsf{autodiff}(f) \Vert_\mathrm{HS}$.}
\end{figure}
The results shown in 
Fig.~\ref{fig:test} suggest that an even splitting, i.e., $\alpha=1/2$ yields the highest precision of the Jacobian-vector product rule. However, it might be more efficient to choose $\alpha = 0$ or $\alpha = 1$ to reduce computational operations. 

\end{document}